\documentclass{scrartcl}
\usepackage[numbers]{natbib}
\usepackage{fontenc}
\usepackage{shadethm}
\usepackage{amsthm}
\usepackage{float}
\usepackage{bm}
\usepackage{mathtools}
\usepackage{graphicx}
\usepackage{subfig}
\usepackage{stmaryrd}
\usepackage{mathrsfs} 
\usepackage{amsmath}
\usepackage{amssymb}
\usepackage{wasysym}
\usepackage{sectsty}
\usepackage[hyperfootnotes=false]{hyperref}
\usepackage[a4paper, left=2.7cm, right=2.7cm, top=2.5cm]{geometry}
\usepackage{chngcntr}
\counterwithin*{equation}{section}

\usepackage{cleveref}
\DeclareMathAlphabet{\mathpzc}{OT1}{pzc}{m}{it}

\sectionfont{\normalsize\centering}
\subsectionfont{\footnotesize\centering}

\theoremstyle{plain}
\newtheorem{thm}{Theorem}[section] 

\theoremstyle{definition}
\newtheorem{defn}[thm]{Definition} 
\newtheorem{lem}[thm]{Lemma}

\newtheorem{rem}[thm]{Remark}

\newcommand*{\Scale}[2][4]{\scalebox{#1}{$#2$}}%

\def\XXint#1#2#3{{\setbox0=\hbox{$#1{#2#3}{\int}$ }
		\vcenter{\hbox{$#2#3$ }}\kern-.6\wd0}}

\usepackage[utf8]{inputenc}
\usepackage[german,english,russian]{babel}

\newcounter{MPequ}

\newcounter{AppA}

\newcounter{AppB}

\newcounter{AppC}

\newcounter{AppD}

\newcounter{AppE}

\begin{document}\selectlanguage{english}
\begin{center}
\normalsize \textbf{\textsf{Charged particle motion in a strong magnetic field: The first order expansion}}
\end{center}
\begin{center}
	Ugo Boscain$^{\star,}$\footnote{\textit{E-mail address:} \href{mailto:Ugo.boscain@inria.fr}{ugo.boscain@inria.fr}}, Wadim Gerner$^{\dagger,}$\footnote{\textit{E-mail address:} \href{mailto:wadim.gerner@edu.unige.it}{wadim.gerner@edu.unige.it}}
\end{center}
\begin{center}
{\footnotesize Laboratoire Jacques-Louis Lions, Sorbonne Universit\'{e}, Universit\'{e} de Paris, CNRS, Inria, Paris, France$^{\star}$}
\newline
\newline
{\footnotesize MaLGa Center, Department of Mathematics, Department of Excellence 2023-2027, University of Genoa, Via Dodecaneso 35, 16146 Genova, Italy$^{\dagger}$}
\end{center}
{\small \textbf{Abstract:} 
We provide a mathematically rigorous derivation of the first order expansion of the motion of a charged particle in a strong magnetic field. In contrast to the derivations that can be found in the physics literature we solely assume throughout that the magnetic field is strong. In particular we do not need to make any structural assumptions on the particle motion, such as the gyroradius being small in comparison to the magnetic length scale. Instead, some of the additional assumptions which are usually made in the physics literature turn out to be an a posteriori consequence in our approach. Our approach further justifies the utilisation of the guiding centre approximation at "bounce points" within magnetic mirrors, a situation which violates the usual assumptions which are made in the physics literature when deriving the guiding centre approximation.
\newline
\newline
{\small \textit{Keywords}: Charged particle motion, Dynamical systems, Neoclassical transport, Plasma physics, Stellarator}
\newline
{\small \textit{2020 MSC}: 34A45, 37N20, 78A30, 78A35, 78A55}

\section{Introduction}
\label{Introduction}

Consider a particle of mass  $m$, charge  $q$ and position $x$ moving in $\mathbb{R}^3$ under the action of a  (static) magnetic field $\bm{B}$. Its motion is described by the equation
\begin{gather}
	\label{S1E1}
	m\ddot{x}=q\dot{x}\times \bm{B}(x).
\end{gather}

We introduce a reference magnetic field strength $B_{\operatorname{ref}}>0$ (which we for instance may take to be the average magnetic field strength) and define the \textbf{reference gyrofrequency} $\omega:=\frac{q B_{\operatorname{ref}}}{m}$ together with the normalised magnetic field $B:=\frac{\bold{B}}{B_{\operatorname{ref}}}$. Then (\ref{S1E1}) becomes
\begin{gather}
	\label{S1E2}
	\ddot{x}_{\omega}=\omega \dot{x}_{\omega}\times B(x_{\omega}).
\end{gather}
where the subscript $\omega$ indicates the dependence of the particle trajectory on the value of $\omega$.

Although \eqref{S1E2} is easy to write, its explicit solution is hard to find in general. In many applications (as for instance in magnetic confinement for plasma fusion) it is crucial to understand the asymptotic behaviour of $x_\omega$ in the regime of large $\omega$. In particular, under suitable regularity assumptions on the magnetic field $B$ (for instance $B\in C^2_{\operatorname{loc}}(\mathbb{R}^3,\mathbb{R}^3)$), one would like to prove an expansion of the form
\begin{gather}
\label{espansione}
x_\omega(t)={\mathbf{x}}_0(t)+\frac1\omega {\mathbf{x}}_1(t)+o\left(\frac1\omega\right) \text{ as }\omega\rightarrow\infty
\end{gather}
and to compute an explicit expression of ${\mathbf{x}}_0(t)$ and ${\mathbf{x}}_1(t)$. 
Here  $o(\frac1\omega)$ is understood in a suitable functional space, for instance $C^0_{\operatorname{loc}}(\mathbb{R})$.

In the physics literature this expansion has been deeply studied starting in the 1940s with the pioneering work of Alfvén
\cite{alfven1940motion}, followed by more systematic contributions of Northrop where the different drift terms were identified \cite{northrop1961guiding,northrop1963adiabatic}, and culminating in the modern variational approach developed by Littlejohn \cite{littlejohn1983variational}. 
See also the more recent references \cite{Chen16,IGPW20,IGPW24}.

Proving rigorously the validity of an expansion of the type \eqref{espansione} and obtaining explicit expressions for
${\mathbf{x}}_0(t)$ and ${\mathbf{x}}_1(t)$  is however a delicate task. In the physics literature (see the references cited above), this program is typically carried out by introducing additional assumptions on the behaviour of the solution $x_\omega$ for $\omega$ large. We shall discuss this point in detail in Section~\ref{s-physics}. 

	Typically, in the physics literature, some additional a priori structural assumptions are being made on the solution $x_{\omega}$ as well as its derivatives. The following two assumptions are archetypical, see Section~\ref{s-physics} for a more thorough discussion:
\begin{itemize}
	\item It is assumed that the particle motion $x_{\omega}$ may be decomposed into a main "guiding centre" motion $R_{\omega}$ and an oscillating "gyromotion" $\rho_{\omega}$. In addition, it is further assumed that the gyromotion is small in comparison to the guiding centre motion or in other words that the guiding centre is "dominant".
	\item It is further assumed, based on statistical arguments, that the particle velocity which is parallel to the magnetic field is of the same order of magnitude as the thermal velocity of the plasma. This assumption is used during the course of the derivation in the physics literature in order to argue that certain terms become negligible. However, there are relevant physical situations in which this assumption is no longer valid, such as in magnetic mirrors.
\end{itemize} 
In this sense, the existing derivations rely on a bootstrap argument in which certain properties of $x_\omega$ are assumed a priori, rather than being derived from the dynamics.

In other words, while the expansion \eqref{espansione} has proved extremely effective in applications, 
its justification is still incomplete from a mathematical perspective. 
A rigorous derivation should establish \eqref{espansione} under assumptions on the magnetic field $B$ alone, 
without presupposing structural assumptions of the trajectory $x_\omega$.

\bigskip

The zero order expansion, also known as the zeroth order guiding centre motion in the physics literature, has been investigated recently in \cite{BG25} in a mathematically rigorous manner. The main result in this regards (which contains in particular ${\mathbf{x}}_0(t)$) is the following

\begin{thm}[{\cite[Theorem 1.2 \& Corollary 1.4]{BG25}}]
		\label{S2T1}
		Let $B\in C^2_{\operatorname{loc}}(\mathbb{R}^3,\mathbb{R}^3)$ be a no-where vanishing div-free vector field and let $x_0,v_0\in \mathbb{R}^3$. Let further $x_{\omega}$ denote the (unique) solution to (\ref{S1E2}) with initial conditions $x_0,v_0$. Then there exists some $x\in C^3_{\operatorname{loc}}(\mathbb{R},\mathbb{R}^3)$ such that
		\begin{gather}
			\nonumber
			x_{\omega}\rightarrow x\text{ in }C^{0,\alpha}_{\operatorname{loc}}(\mathbb{R},\mathbb{R}^3)\text{ for all }0\leq \alpha<1\text{ as }\omega\rightarrow\infty.
		\end{gather}
		In addition, $\dot{x}(t)= h(t)b(x(t))$ for a suitable $h\in C^2_{\operatorname{loc}}(\mathbb{R})$ function where $b(x):=\frac{B(x)}{|B(x)|}$. Further, the magnetic moment $\mu(t):=\frac{|v_0|^2-h^2(t)}{2|B(x(t))|}$ is time-independent.
	\end{thm}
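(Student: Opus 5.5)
Since the Lorentz force is orthogonal to the velocity, $\frac{d}{dt}|\dot x_\omega|^2=0$, so $|\dot x_\omega(t)|=|v_0|$ for all $t$ and $\omega$. The family $(x_\omega)_\omega$ is therefore uniformly Lipschitz on compact intervals. By Arzel\`a--Ascoli and the compact embedding $C^{0,1}\hookrightarrow C^{0,\alpha}$ for $\alpha<1$, every sequence $\omega_n\to\infty$ has a subsequence converging in $C^{0,\alpha}_{\operatorname{loc}}$, and $\dot x_{\omega_n}\rightharpoonup^* \dot x$ in $L^\infty_{\operatorname{loc}}$. The plan is to show that every such limit solves one and the same well-posed ODE, namely $\dot x=h\,b(x)$ with $h$ determined through conservation of the magnetic moment. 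Convergence of the whole family then follows from uniqueness of limits.

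\textbf{Identifying the limit.} First I would show the limit is parallel to $B$. Dividing \eqref{S1E2} by $\omega$ gives $\dot x_\omega\times B(x_\omega)=\omega^{-1}\ddot x_\omega$. Tested against smooth compactly supported functions, the right-hand side is $O(\omega^{-1})$ after an integration by parts, since $\dot x_\omega$ is bounded. Weak-$*$ convergence of $\dot x_\omega$ together with uniform convergence of $B(x_\omega)$ lets us pass to the limit in the product, which gives $\dot x\times B(x)=0$. Hence $\dot x=h\,b(x)$ with $h:=\dot x\cdot b(x)\in L^\infty$.

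To determine $h$, I would use the parallel velocity $u_\omega:=\dot x_\omega\cdot b(x_\omega)$. Its derivative is
\begin{gather*}
\dot u_\omega=\dot x_\omega\cdot Db(x_\omega)\dot x_\omega ,
\end{gather*}
because $\ddot x_\omega\perp b$. The right-hand side is quadratic in the fast-oscillating perpendicular velocity $v_\perp$. I would average it as follows. Write $v_\perp$ in terms of a gyrophase $\theta$ with $\dot\theta\approx\omega|B|$, using a local orthonormal frame $e_1,e_2$ normal to $b$. The terms $v_\perp\otimes v_\perp$ then average to $\frac{|v_\perp|^2}{2}(I-b\otimes b)$, up to remainders which, after one integration by parts in time, are $O(\omega^{-1})$. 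This uses $B\in C^2$, and the no-where vanishing assumption makes $\dot\theta$ large. The cross terms $u\,v_\perp$ are $O(\omega^{-1})$ weakly for the same reason.

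In the limit one obtains
\begin{gather*}
\dot h=h^2\,b\cdot Db\,b+\tfrac{|v_0|^2-h^2}{2}\,\operatorname{tr}\big((I-b\otimes b)Db\big).
\end{gather*}
The first term vanishes since $|b|=1$. Using $\operatorname{div}B=0$, the trace equals $\operatorname{div} b=-\frac{b\cdot\nabla|B|}{|B|}$. This yields the mirror-force equation $\dot h=-\frac{|v_0|^2-h^2}{2|B|}\,b\cdot\nabla|B|$.

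\textbf{Regularity and uniqueness of the limit.} I would show directly that $\mu=\frac{|v_0|^2-h^2}{2|B(x)|}$ satisfies $\dot\mu=0$ along the limit system, with $\mu$ determined by the initial data $x_0,v_0$. Substituting $h^2=|v_0|^2-2\mu|B(x)|$, the pair $(x,h)$ solves the closed ODE $\dot x=hb(x)$, $\dot h=-\mu\, b\cdot\nabla|B|(x)$ with locally Lipschitz right-hand side, since $B\in C^2$ and $b\in C^2$. So it has a unique solution, and bootstrapping gives $h\in C^2$ and $x\in C^3$. A separate step is needed to show that $u_\omega\to h$ strongly, and not merely weakly; this is needed to pass $h^2$ and $|v_\perp|^2=|v_0|^2-u_\omega^2$ to the limit. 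I would get it from the equicontinuity of $u_\omega$, which follows from the averaged formula for $\dot u_\omega$ being uniformly bounded up to weakly vanishing terms. A cleaner alternative is to control the approximate adiabatic invariant $\mu_\omega:=\frac{|v_\perp|^2}{2|B(x_\omega)|}$, showing $\mu_\omega(t)-\mu_\omega(0)\to0$ locally uniformly. Then $u_\omega^2$ converges strongly as well.

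\textbf{Main obstacle.} The hard step is the averaging argument that justifies near-conservation of $\mu_\omega$ with no assumption on the particle motion. In particular it must work at bounce points, where $h=0$ and the parallel velocity gives no natural time scale. I would handle it with a first-order normal-form (near-identity) correction: choose $\tilde\mu_\omega=\mu_\omega+\omega^{-1}F(x_\omega,\dot x_\omega)$ with $F$ chosen so that $\frac{d}{dt}\tilde\mu_\omega$ has no $O(1)$ oscillating part. Then $|\tilde\mu_\omega(t)-\tilde\mu_\omega(0)|\le C|t|/\omega$, using only the boundedness of $|\dot x_\omega|$ and the $C^2$ bounds on $B$ over a compact set containing the trajectories.
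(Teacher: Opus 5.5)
Your outline is correct, but it is organised differently from the argument behind this statement. The paper does not reprove the statement; it imports it from \cite{BG25}, where under global bounds on $B$ and its derivatives one even gets a semi-explicit rate. The averaging machinery the paper uses in its proofs never introduces a gyrophase or a normal frame. One writes $v^\perp_\omega=\frac{b_\omega}{\omega|B_\omega|}\times\dot v_\omega$ and integrates by parts. For quadratic terms, an integration by parts trades $v^\perp_\omega\otimes v^\perp_\omega$ for $(v_\omega\times b_\omega)\otimes(v_\omega\times b_\omega)$, up to $o(1)$ after time integration. The sum of the two equals $|v^\perp_\omega|^2(I-b_\omega\otimes b_\omega)$, because $\{b_\omega,v^\perp_\omega,v_\omega\times b_\omega\}$ is orthogonal with $|v^\perp_\omega|=|v_\omega\times b_\omega|$ (compare (\ref{5E19}) and the display after it). This gives exactly your $\frac{|v_\perp|^2}{2}(I-b\otimes b)$, hence $\dot h=\frac{|v_0|^2-h^2}{2}\operatorname{div}b=-\mu\, b\cdot\nabla|B|$, with no nondegeneracy condition on $v^\perp_\omega$. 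Your compactness-plus-uniqueness argument yields only qualitative convergence, which is all that is claimed here. The integration-by-parts identities carry explicit $O(1/\omega)$ remainders, so fed into a Gr\"onwall comparison with the limit system they also give rates.

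Some steps of your version need repair or are more complicated than necessary.

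\emph{The gyrophase, not the bounce points, is the weak spot.} The fast scale is $1/(\omega|B|)$ regardless of $h$, so $h=0$ causes no difficulty. By contrast, $\theta$ degenerates where $v_\perp$ is small. Indeed $\dot v_\perp$ contains the $O(1)$ term $-u_\omega^2\kappa(x_\omega)$, which does not vanish with $v_\perp$, so one only has $|\dot\theta|=\omega|B|+O(|v_\perp|^{-1})$. The integration by parts with $1/\dot\theta$ therefore breaks down when $|v_\perp|\lesssim\omega^{-1}$, which you cannot exclude (e.g.\ $v_0\parallel B(x_0)$). Work instead with $\zeta:=v_\perp\cdot e_1+i\,v_\perp\cdot e_2$ (with $e_1\times e_2=b$). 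It satisfies $\dot\zeta=-i\omega|B|\zeta+O(1)$ uniformly, so you can integrate by parts using $\zeta^2$; alternatively, use the identity above.

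\emph{Strong convergence of $u_\omega$ needs no averaging.} You have $|\dot u_\omega|\le|v_0|^2\sup_K|Db|$, where $K$ is a ball containing all trajectories on $[-T,T]$. So $u_\omega$ is equi-Lipschitz and Arzel\`a--Ascoli applies directly. Your alternative via $\mu_\omega$ would not suffice here: strong convergence of $u_\omega^2$ does not by itself identify its limit with $h^2$, nor fix the sign of $u_\omega$ near bounce points.

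\emph{Uniqueness does not need $\mu$.} The system $\dot x=hb(x)$, $\dot h=\frac{|v_0|^2-h^2}{2}\operatorname{div}b(x)$ is already locally Lipschitz. Uniqueness of the limit therefore does not require conservation of $\mu$ first; $\dot\mu=0$ then follows by direct differentiation.
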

	We point out that in the original reference \cite[Theorem 1.2]{BG25} it is assumed that the magnetic field $B$ is of class $C^2$ and that its derivatives up to second order are globally bounded and that $B$ has a certain decay at infinity. This is necessary in order to derive a semi-explicit convergence rate of $x_{\omega}$ to $x$. If $B$ is solely of class $C^2$, then we still obtain a $C^{0,\alpha}_{\operatorname{loc}}$ convergence result with an unknown convergence rate. Since we will not derive any convergence rate in the present manuscript, we always assume throughout that $B$ is of class $C^2$, but with possibly (globally) unbounded derivatives.
	\newline
	\newline
	The main goal of this manuscript is to obtain the first order correction term ${\mathbf{x}}_1(t)$ in the expansion of $x_{\omega}$ which describes the drift of the particle normal to the direction of the magnetic field.
	
	Our first main result is the following
	\begin{thm}[First order expansion (full particle trajectory)]
		\label{S2T2}
		Let $B\in C^2_{\operatorname{loc}}(\mathbb{R}^3,\mathbb{R}^3)$ be a no-where vanishing div-free vector field, $x_0,v_0\in \mathbb{R}^3$ and $x_{\omega}$ denote the solution to (\ref{S1E2}) with initial conditions $x_0,v_0$. Let further $x(t),h(t),\mu(t)$ be as in \Cref{S2T1} and set $b(y):=\frac{B(y)}{|B(y)|}$, $\mu_0:=\mu(0)$. Then the following holds
		\begin{gather}
			\nonumber
			x_{\omega}(t)=x_0+\int_0^t\left(h(s)+o\left(1\right)\right)b(x_{\omega}(s))ds+\frac{b(x(t))\times v_{\omega}(t)}{\omega|B(x(t))|}
			\\
			\label{S2E1}
			+\frac{1}{\omega}\left(\frac{v_0\times b(x_0)}{|B(x_0)|}+\int_0^t\frac{b(x_{\omega}(s))}{|B(x(s))|}\times \left(\mu_0\nabla (|B|)(x(s))+h^2(s)\kappa(x(s))+o\left(1\right)\right)ds\right)\text{ in }C^0_{\operatorname{loc}}(\mathbb{R})\text{ as }\omega\rightarrow\infty
		\end{gather}
		where $\kappa(y):=(\nabla_bb)(y)$ is the magnetic curvature and $v_{\omega}(t)=\dot{x}_{\omega}(t)$.
	\end{thm}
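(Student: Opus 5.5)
We write $v_\omega=\dot x_\omega$ and use $|v_\omega(t)|=|v_0|$ (energy conservation). Since $\dot v_\omega=\omega v_\omega\times B(x_\omega)$, we have $v_\omega\times B(x_\omega)=\frac{1}{\omega}\dot v_\omega$. The plan is to decompose $v_\omega$ into its parallel and perpendicular parts with respect to $b(x_\omega)$, and to integrate the perpendicular part by parts using this identity. Crossing the equation with $b(x_\omega)/|B(x_\omega)|$ gives
\begin{gather}
\nonumber
v_\omega = (v_\omega\cdot b(x_\omega))\,b(x_\omega) + \frac{b(x_\omega)\times \dot v_\omega}{\omega |B(x_\omega)|}.
\end{gather}
We then integrate from $0$ to $t$. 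The perpendicular term is written as $\frac{1}{\omega}\frac{d}{dt}\Big(\frac{b(x_\omega)\times v_\omega}{|B(x_\omega)|}\Big)-\frac1\omega\frac{d}{dt}\Big(\frac{b(x_\omega)}{|B(x_\omega)|}\Big)\times v_\omega$. The boundary terms produce $\frac{b(x_\omega(t))\times v_\omega(t)}{\omega|B(x_\omega(t))|}-\frac{b(x_0)\times v_0}{\omega|B(x_0)|}$. In the first of these we replace $x_\omega(t)$ by $x(t)$ at cost $o(1/\omega)$, using \Cref{S2T1} and the boundedness of $v_\omega$. This gives the stated boundary terms, including $\frac{v_0\times b(x_0)}{|B(x_0)|}$. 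The parallel part gives $\int_0^t (v_\omega\cdot b(x_\omega))b(x_\omega)\,ds$. We must show $v_\omega\cdot b(x_\omega)=h+o(1)$ uniformly on compacts. I expect this to follow from the proof of \Cref{S2T1}, or from an averaging argument applied to the ODE for $v_\omega\cdot b(x_\omega)$, whose derivative $v_\omega\cdot Db(x_\omega)v_\omega$ is bounded.

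The remaining term is $-\frac1\omega\int_0^t \frac{d}{ds}\big(\frac{b}{|B|}\big)(x_\omega)\times v_\omega\,ds=-\frac1\omega\int_0^t \big(D(b/|B|)(x_\omega)v_\omega\big)\times v_\omega\,ds$. It is quadratic in $v_\omega$, so we split $v_\omega=u b+w$ with $u=v_\omega\cdot b(x_\omega)$ and $w\perp b$. The $u^2$ part gives $-u^2\,\big(\nabla_b(b/|B|)\big)\times b$, and $\nabla_b(b/|B|)=\kappa/|B|-b\,\nabla_b|B|/|B|^2$. Its cross product with $b$ therefore yields $\frac{b}{|B|}\times h^2\kappa$ after replacing $u^2$ by $h^2+o(1)$. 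The terms linear in $w$ and the $w\otimes w$ term are oscillatory, since $w$ rotates at frequency $\sim\omega|B|$. We handle them by a second averaging step. For a smooth bilinear expression $A(x_\omega)[w,w]$, we write $w\otimes w=\frac{|w|^2}{2}(I-b\otimes b)+(\text{trace-free rotating part})$. The rotating part can be written as $\frac{1}{\omega}\frac{d}{dt}(\cdots)+O(1/\omega)$ via the equation $\dot w\approx \omega|B|\,w\times b$, so its integral is $O(1/\omega)$. Linear terms in $w$ are handled the same way, using $w=\frac{b\times\dot v_\omega}{\omega|B|}$ and a further integration by parts, with the possible $C^1$-loss absorbed by $B\in C^2$. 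The averaged part involves $\frac{|w|^2}{2}=\frac{|v_0|^2-u^2}{2}\to\mu_0|B(x)|$, using \Cref{S2T1} and the constancy of $\mu$. Combined with $\operatorname{div}B=0$, which gives $\operatorname{tr}_\perp D b$ in terms of $\nabla_b|B|$, the averaged part should produce exactly $\frac{b}{|B|}\times\mu_0\nabla|B|$. The parallel components of the result are absorbed into the $\int (h+o(1))b(x_\omega)$ term, since they are $O(1/\omega)$.

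I expect the main obstacle to be the averaging of the $w\otimes w$ terms. It requires a clean identity expressing the trace-free part of $w\otimes w$ as an exact derivative up to $O(1/\omega)$, which should be something like $\frac{d}{dt}\big(w\otimes(b\times w)+(b\times w)\otimes w\big)=2\omega|B|(\dots)+O(1)$. It also requires checking that the resulting $\mu_0$-coefficient, after using $\operatorname{div}B=0$, is exactly $\mu_0\nabla|B|$ in the perpendicular direction. A secondary difficulty is the uniform convergence $v_\omega\cdot b(x_\omega)\to h$, which I would extract from \cite{BG25}.
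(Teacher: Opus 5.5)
Your proposal is correct and is essentially the paper's proof. It uses the same integration by parts of $v^\perp_\omega=\frac{b_\omega\times\dot v_\omega}{\omega|B_\omega|}$ and the same parallel/mixed/perpendicular splitting. Your identity $\frac{d}{dt}(w\otimes\tilde w+\tilde w\otimes w)=2\omega|B|(w\otimes w-\tilde w\otimes\tilde w)+O(1)$, with $\tilde w=b\times w$, is exactly the mechanism behind (\ref{5E14}) and (\ref{5E19}), where the paper trades $v^\perp_\omega$ for $v_\omega\times b_\omega$ and takes the half-trace; the paper removes the mixed terms via weak convergence of $v^\perp_\omega$ instead of a further integration by parts, which is immaterial.

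The only inaccuracy is the expected role of $\operatorname{div}B=0$ in the averaging step. Since $(Db\,w)\times w$ is automatically parallel to $b$ for $w\perp b$, the grad-$B$ drift comes entirely from the $b\,(w\cdot\nabla|B|^{-1})$ part of $D(b/|B|)w$. Divergence-freeness enters only through the conservation of $\mu$ in \Cref{S2T1}.
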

	\begin{rem}
		We point out that as commonly done in the physics literature we decompose the motion of $x_{\omega}$ into a motion parallel to the magnetic field given by $\int_0^t(h(s)+o(1))b(x_{\omega}(s))ds$ and a motion perpendicular to the magnetic field given by the second line in (\ref{S2E1}). As we shall see below, the remaining term containing $v_{\omega}$ essentially encompasses an oscillation. We notice that the velocity of the main motion in parallel direction of the magnetic field is given by $(h(s)+o(1))b(x_{\omega}(s))$ and hence is dominated by the term $h(s)$ which is of order $\mathcal{O}(1)$. In contrast, the order of the motion in perpendicular direction is of order $\mathcal{O}\left(\frac{1}{\omega}\right)$. The goal of \Cref{S2T2} is to provide the leading order terms for the parallel as well as perpendicular motion. This approach is customary in the physics literature. We point out that with some additional work one should be able to recover the second order correction term in parallel direction which should be of order $\mathcal{O}\left(\frac{1}{\omega}\right)$. We, however, do not pursue this in the present manuscript. Lastly, let us point out that if we denote by $x(t)$ the zero order guiding centre approximation as in \Cref{S2T1}, then we have $x(t)=x_0+\int_0^t(h(s)+o(1))b(x_{\omega}(s))ds+o(1)$ in $C^0_{\operatorname{loc}}(\mathbb{R})$ as $\omega\rightarrow\infty$, so that the zero order approximation is implicitly contained in the expansion (\ref{S2E1}).
	\end{rem}
	
	We observe that (\ref{S1E2}) implies that $|v_{\omega}(t)|=|v_0|$ for all $t$ and $\omega$. Hence, we find $\frac{b(x(t))\times v_{\omega}(t)}{\omega |B(x(t))|}\in \mathcal{O}\left(\frac{1}{\omega}\right)$. Even more, upon time-averaging this term will not contribute significantly to the particle trajectory in the following sense:
	
	For any fixed $t_0\geq 0$ we have
	\begin{gather}
		\nonumber
		\frac{\int_{t_0}^t\frac{b(x(s))\times v_{\omega}(s)}{\omega |B(x(s))|}ds}{t-t_0}\in o\left(\frac{1}{\omega}\right) \text{ in } C^0_{\operatorname{loc}}((t_0,\infty)) \text{ as } \omega\rightarrow\infty
	\end{gather}
	which follows immediately from \Cref{S2T1},(\ref{S1E2}) and an integration by parts.
	\newline
	\newline
	We notice that at this point we solely considered the full particle trajectory $x_{\omega}(t)$ and have not performed any decomposition into a guiding centre motion and a gyromotion. To this end we make the following definitions.
	\begin{defn}[Guiding centre and gyromotion]
		\label{S2D3}
		Let $B\in C^2_{\operatorname{loc}}(\mathbb{R}^3,\mathbb{R}^3)$ be a no-where vanishing div-free vector field and $x_0,v_0\in \mathbb{R}^3$. With the notation from \Cref{S2T2} we define the \textit{gyromotion} as follows
		\begin{gather}
			\label{S2E2}
			\rho_{\omega}(t):=\frac{b(x_{\omega}(t))\times v_{\omega}(t)}{\omega|B(x_{\omega}(t))|}
		\end{gather}
		and we define the \textit{guiding centre} as follows
		\begin{gather}
			\label{S2E3}
			R_{\omega}(t):=x_{\omega}(t)-\rho_{\omega}(t).
		\end{gather}
	\end{defn}
	With these definition we obtain the following conclusion from \Cref{S2T2}.
	\begin{thm}
		\label{S2T4}
		Let $B\in C^2_{\operatorname{loc}}(\mathbb{R}^3,\mathbb{R}^3)$ be a div-free no-where vanishing vector field and $x_0,v_0\in \mathbb{R}^3$. Using the notation from \Cref{S2T2} we have
		\begin{gather}
			\nonumber
			R_{\omega}(t)=x_0+\int_0^t(h(s)+o(1))b(R_{\omega}(s))ds
			\\
			\label{S2E4}
			+\frac{1}{\omega}\left(\frac{v_0\times b(x_0)}{|B(x_0)|}+\int_0^t\frac{b(R_{\omega}(s))}{|B(R_{\omega}(s))|}\times \left(\mu_0\nabla (|B|)(R_{\omega}(s))+h^2(s)\kappa(R_{\omega}(s))+o(1)\right)ds\right)
		\end{gather}
		in $C^0_{\operatorname{loc}}(\mathbb{R})$ as $\omega\rightarrow\infty$.
	\end{thm}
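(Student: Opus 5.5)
We derive \Cref{S2T4} directly from \Cref{S2T2} by substituting $R_{\omega}=x_{\omega}-\rho_{\omega}$ and replacing every evaluation at $x_{\omega}(s)$ or $x(s)$ by an evaluation at $R_{\omega}(s)$. Each replacement should cost only an error that can be absorbed into the $o(1)$ terms already present in (\ref{S2E1}).

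The first step handles the oscillating term. Subtracting $\rho_{\omega}(t)$ from (\ref{S2E1}) leaves the difference
\begin{gather}
\nonumber
\frac{b(x(t))\times v_{\omega}(t)}{\omega|B(x(t))|}-\frac{b(x_{\omega}(t))\times v_{\omega}(t)}{\omega|B(x_{\omega}(t))|}.
\end{gather}
Since $|v_{\omega}|=|v_0|$ and $y\mapsto b(y)/|B(y)|$ is $C^1$ and hence locally Lipschitz, this difference is bounded by $\frac{C}{\omega}|x(t)-x_{\omega}(t)|$ on compact time intervals. By \Cref{S2T1} we have $x_{\omega}\to x$ in $C^0_{\operatorname{loc}}$, so the difference is $o(1/\omega)$ uniformly on compacts. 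This gives
\begin{gather}
\nonumber
R_{\omega}(t)=x_0+\int_0^t(h+o(1))b(x_{\omega})ds+\frac{1}{\omega}\Big(\tfrac{v_0\times b(x_0)}{|B(x_0)|}+\int_0^t\tfrac{b(x_{\omega})}{|B(x)|}\times(\mu_0\nabla|B|(x)+h^2\kappa(x)+o(1))ds\Big)+o\left(\tfrac{1}{\omega}\right).
\end{gather}
The trailing $o(1/\omega)$ is then absorbed as an extra $o(1)$ inside the $\frac{1}{\omega}\int_0^t$ term. For this we write it as $\frac{1}{\omega}\int_0^t \frac{b(R_{\omega})}{|B(R_{\omega})|}\times e_{\omega}\,ds$ for a suitable $o(1)$ function $e_{\omega}$.

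This absorption step is the one I expect to be the main technical point, because a cross product with $b$ only produces components perpendicular to $b$. My plan is to split any vector error $\epsilon_{\omega}(t)\in o(1/\omega)$ into a part parallel to $b(R_{\omega})$ and a perpendicular part. The parallel part is $O(1)\cdot o(1/\omega)$ and goes into the parallel integral $\int_0^t (h+o(1))b(R_{\omega})ds$. If the error is not differentiable, I would instead treat it as an additive $o(1/\omega)$ that is tacitly allowed; I would check how the paper intends "$+o(1)$" inside integrands. If genuinely needed, I would note that an $o(1/\omega)$ error of order $\frac{1}{\omega}\cdot o(1)$ can be realised as $\frac{1}{\omega}\int_0^t$ of an $o(1)$ integrand, by smoothing in time.

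The second step replaces $x_{\omega}$ and $x$ by $R_{\omega}$ inside the integrals. We have $|R_{\omega}-x_{\omega}|=|\rho_{\omega}|\le C/\omega$ and $R_{\omega}\to x$ in $C^0_{\operatorname{loc}}$. In the parallel integral, $b$ is locally Lipschitz, so $h\,(b(x_{\omega})-b(R_{\omega}))=O(1/\omega)$. This is absorbed into the parallel $o(1)$ since $b(x_{\omega})$ is proportional to $b(R_{\omega})$ up to $O(1/\omega)$; the perpendicular leftover is $\frac{1}{\omega}O(1)$, which I would examine more carefully. In the perpendicular integral, which already carries the factor $\frac{1}{\omega}$, we use continuity of $b$, $|B|$, $\nabla|B|$ and $\kappa$; these are continuous because $B\in C^2$. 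Uniform convergence of $x_{\omega}$ and $R_{\omega}$ to $x$ on compacts then shows that each replacement changes the integrand by $o(1)$.

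The leftover from the parallel replacement worries me. The term $h\,(b(x_{\omega})-b(R_{\omega}))\approx -h\,(\rho_{\omega}\cdot\nabla)b$ is genuinely of order $1/\omega$, and it is oscillatory because $\rho_{\omega}\propto b\times v_{\omega}$ rotates at frequency $\omega$. To handle it I would use (\ref{S1E2}) in the form $v_{\omega}=\frac{1}{\omega}$(derivative terms), i.e.\ $b\times v_{\omega}=-\frac{1}{\omega|B|}\big(\dot v_{\omega}-\ldots\big)$, and integrate by parts. This exactly mirrors the averaging remark after \Cref{S2T2}, and it shows $\int_0^t h\,(\rho_{\omega}\cdot\nabla)b\,ds=o(1/\omega)$ locally uniformly. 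Collecting all terms then yields (\ref{S2E4}).
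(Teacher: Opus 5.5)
Your proposal is correct and follows the paper's proof essentially step by step. You subtract $\rho_\omega$ after moving its evaluation from $x_\omega$ to $x$ at cost $o(1/\omega)$, shift all drift-term evaluations to $R_\omega$ by uniform convergence, and isolate $\int_0^t h\,(b(x_\omega)-b(R_\omega))\,ds$ as the only genuinely $O(1/\omega)$ leftover. The paper removes that term the same way: it linearises $b$ to get $Db(x)\cdot\rho_\omega$ (note the sign is $+$, not $-$), writes $\rho_\omega=-\dot v_\omega/(\omega^2|B|^2)$ via (\ref{S1E2}), and integrates by parts to gain an extra factor $1/\omega$.

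On the absorption issue, the paper simply carries a trailing $o(1/\omega)$ and reads the statement as allowing it, so your first option is the intended reading. Drop the smoothing fallback: an arbitrary continuous $o(1/\omega)$ error $\epsilon_\omega$ cannot in general be written as $\frac{1}{\omega}\int_0^t$ of an $o(1)$ integrand, since that would force $\omega\epsilon_\omega$ to have Lipschitz constant $o(1)$.
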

We notice that our expansion (\ref{S2E4}) contains a "curvature-drift" which is parallel to $b\times \kappa$ and caused by the curvature $\kappa$ of the magnetic field lines and a "grad-$B$-drift" which is parallel to $b\times \nabla |B|$ and caused by a change in magnetic field strength. Both of these terms also appear in the physics literature, cf. \cite[Chapter 4.3.6]{IGPW24}. A more detailed comparison of our results with the results in the physics literature may be found in \Cref{s-physics-math}. In particular, we include a detailed discussion about the necessity of the assumptions in order for the guiding centre approximation to hold.

The structure of the paper is the following. In Section \ref{s-math} we prove the main result of the paper (namely Theorem \ref{S2T4}).
The main idea is to express $x_{\omega}(t)-x_0=\int_0^t\dot{x}_{\omega}(s)ds$ and to integrate by parts while repeatedly exploiting (\ref{S1E2}). To control the appearing terms and identify the corresponding limits one needs to use non-standard calculus identities, cf. \Cref{5L4}.
In Section \ref{s-physics} we discuss the approach in the physics literature following mainly \cite{IGPW20,IGPW24}.
In Section~\ref{s-physics-math} we compare our assumptions and results with those commonly adopted in the physics literature, and show their mutual consistency. 
Finally, in Section~\ref{s-final} we present concluding remarks, including a discussion of 
{\textbf{i)}} the fact that gyromotion always occurs in the direction opposite to the curvature of the charged particle trajectory, and 
{\textbf{ii)}} the case of plasma equilibria, where we present a different way to express the particle drift. This equivalent expression captures more naturally the drift away from magnetic surfaces and is therefore better suited for plasma confinement purposes than the expression commonly found in the physics literature.

\section{Proof of mathematical results}
\label{s-math}
\subsection{A useful identity}
We start by proving a useful calculus formula.
\begin{lem}
	\label{5L4}
	Let $v\in \mathbb{R}^3$ and $b:\mathbb{R}^3\rightarrow\mathbb{R}^3$ be a $C^1$-unit vector field, i.e. $|b(x)|=1$ for all $x\in \mathbb{R}^3$. Then for every $x\in \mathbb{R}^3$
	\begin{gather}
		\nonumber
		(Db(x)\cdot (v\times b(x)))\times (v\times b(x))
		\\
		\nonumber
		=\left((b\cdot v)\cdot (v\cdot \operatorname{curl}(b)(x))-|v|^2(b(x)\cdot \operatorname{curl}(b)(x))-\nabla_vb(x)\cdot (v\times b(x))\right)b(x)
	\end{gather}
\end{lem}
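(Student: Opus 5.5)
Let $w:=v\times b(x)$ and $A:=Db(x)$, with $A_{ij}=\partial_j b_i$, so that $Db(x)\cdot w=\nabla_w b(x)$. The proof is a direct pointwise computation at a fixed $x$. I will first reduce the left-hand side to a scalar multiple of $b(x)$, and then identify that scalar using the antisymmetric part of $A$, which is encoded by $\operatorname{curl}(b)$.

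\emph{Step 1 (direction).} Differentiating $|b|^2=1$ gives $b\cdot \nabla_u b=0$ for every $u$. Hence $\nabla_w b\perp b$. Trivially $w\perp b$ as well. The cross product of two vectors orthogonal to the unit vector $b$ is parallel to $b$. So the left-hand side equals $\big((\nabla_w b\times w)\cdot b\big)\,b$. By the scalar triple product and $(v\times b)\times b=(v\cdot b)b-v$, this scalar is
\[
\nabla_w b\cdot (w\times b)=(v\cdot b)\,(\nabla_w b\cdot b)-v\cdot \nabla_w b=-\,v\cdot A w .
\]

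\emph{Step 2 (swapping the roles of $v$ and $w$).} I will use the standard identity for the antisymmetric part of the Jacobian,
\[
(A-A^{T})u=\operatorname{curl}(b)\times u
\qquad\text{for all } u\in\mathbb{R}^3,
\]
which can be checked componentwise or via $\nabla_u b-(\nabla b)^T u=-u\times\operatorname{curl} b$. Applying it with $u=w$ gives
\[
v\cdot Aw=w\cdot Av+v\cdot(\operatorname{curl}(b)\times w)=\nabla_v b\cdot (v\times b)+\operatorname{curl}(b)\cdot (w\times v).
\]

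\emph{Step 3 (expanding the curl term).} We have
\[
w\times v=(v\times b)\times v=|v|^2 b-(v\cdot b)\,v .
\]
Inserting this into Step 2 and combining with Step 1 yields
\[
-v\cdot Aw=(b\cdot v)(v\cdot\operatorname{curl} b)-|v|^2(b\cdot \operatorname{curl} b)-\nabla_v b\cdot(v\times b),
\]
which is exactly the claimed coefficient of $b(x)$.

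The only real obstacle is sign bookkeeping: the index convention for $Db$, the sign in the antisymmetric-part identity, and the orientation of the triple products. I would verify the identity $(A-A^T)u=\operatorname{curl}(b)\times u$ explicitly in components before assembling the rest of the computation.
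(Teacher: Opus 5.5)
Your proof is correct, and all the signs check out. With $A_{ij}=\partial_j b_i$ one has $((A-A^{T})u)_i=(\partial_k b_i-\partial_i b_k)u_k=(\operatorname{curl}(b)\times u)_i$, and the rest is two triple-product expansions.

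It does take a different route from the paper. The paper differentiates the identity $b\cdot(v\times b)=0$ using the textbook formula for $\nabla(X\cdot Y)$ and the identity $\operatorname{curl}(v\times b)=-\nabla_vb+v\operatorname{div}(b)$. It then uses $\nabla_bb=\operatorname{curl}(b)\times b$ and the Jacobi identity to obtain the full vector identity $\nabla_{v\times b}b=(\operatorname{curl}(b)\times v)\times b-\nabla_vb\times b+\operatorname{div}(b)\,v\times b$. Only after crossing with $v\times b$ does the divergence term drop out and the result appear as a multiple of $b$.

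You instead use $b\cdot\nabla_ub=0$ at the outset to reduce the left-hand side to the single scalar $-v\cdot Aw$. A single application of the antisymmetric-part identity then swaps the two slots of the bilinear form $v\cdot Aw$.

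Both proofs rest on the same fact, since the $\nabla(X\cdot Y)$ formula is itself a repackaging of $(A-A^T)u=\operatorname{curl}(b)\times u$. Yours is shorter, never generates the spurious $\operatorname{div}(b)$ term, and needs no Jacobi identity. The paper's gives a more informative intermediate formula for the whole vector $\nabla_{v\times b}b$, written in identities familiar to its physics readership.

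One small caution: your parenthetical $\nabla_ub-(\nabla b)^Tu=-u\times\operatorname{curl}(b)$ is right only if $\nabla b$ means the Jacobian $A$. Under the other common convention $(\nabla b)_{ij}=\partial_i b_j$, its left side vanishes identically. Rely on the componentwise statement you actually use.
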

\begin{proof}[Proof of \Cref{5L4}]
	We start by exploiting the identity
	\begin{gather}
		\nonumber
		0=\nabla (b\cdot (v\times b))=\nabla_{v\times b}b+\nabla_b(v\times b)+(v\times b)\times \operatorname{curl}(b)+b\times (\operatorname{curl}(v\times b))
		\\
		=\nabla_{v\times b}b+v\times \nabla_bb+(v\times b)\times \operatorname{curl}(b)+b\times \operatorname{curl}(v\times b)
	\end{gather}
	where we used that $\nabla_b(v\times b)=\nabla_bv\times b+v\times \nabla_bb$ and $\nabla_bv=0$ since $v$ is a fixed vector independent of $x$. We now use the identity $\operatorname{curl}(v\times b)=-[v,b]+v\operatorname{div}(b)-b\operatorname{div}(v)=-\nabla_vb+v\operatorname{div}(b)$ where we used again that $v$ is independent of $x$. We conclude
	\begin{gather}
		\nonumber
		\nabla_{v\times b}b=\nabla_bb\times v+\operatorname{curl}(b)\times (v\times b)-\nabla_vb\times b+\operatorname{div}(b)v\times b.
	\end{gather}
	Now we note that $\nabla_bb=\operatorname{curl}(b)\times b$ since $b$ is a unit field and so we find $\nabla_bb\times v+\operatorname{curl}(b)\times (v\times b)=(\operatorname{curl}(b)\times v)\times b$ according to the Jacobi identity. We obtain
	\begin{gather}
		\label{5E25}
		\nabla_{v\times b}b=(\operatorname{curl}(b)\times v)\times b-\nabla_vb\times b+\operatorname{div}(b)v\times b.
	\end{gather}
	Now we observe that $Db\cdot (v\times b)=\nabla_{v\times b}b$ so that
	\begin{gather}
		\label{5E26}
		(Db\cdot (v\times b))\times (v\times b)=((\operatorname{curl}(b)\times v)\times b)\times (v\times b)-(\nabla_vb\times b)\times (v\times b).
	\end{gather}
	We compute $(\operatorname{curl}(b)\times v)\times b=(b\cdot \operatorname{curl}(b))v-(b\cdot v)\operatorname{curl}(b)$ and so $((\operatorname{curl}(b)\times v)\times b)\times (v\times b)=(v\cdot b)(v\cdot \operatorname{curl}(b))b-|v|^2(b\cdot \operatorname{curl}(b))b$. On the other hand $(\nabla_vb\times b)\times (v\times b)=(\nabla_vb\cdot (v\times b))b$ since $b$ and $v\times b$ are orthogonal. Therefore (\ref{5E26}) becomes
	\begin{gather}
		\nonumber
		(Db\cdot (v\times b))\times (v\times b)=(b\cdot v)(v\cdot \operatorname{curl}(b))b-|v|^2(b\cdot \operatorname{curl}(b))b-(\nabla_vb\cdot (v\times b))b
	\end{gather}
	which is the claimed identity.
\end{proof}
\subsection{Full particle trajectory}
\begin{proof}[Proof of \Cref{S2T2}]
	We start by expressing the full particle trajectory $x_{\omega}(t)$ in the following way
	\begin{gather}
		\label{S5ExtraExtra1}
		x_{\omega}(t)=x_0+\int_0^tv_{\omega}(s)ds=x_0+\int_0^tv_{\omega}^\parallel(s)ds+\int_0^tv^\perp_{\omega}(s)ds
	\end{gather}
	where $v^\parallel_{\omega}(s):=(v_{\omega}(s)\cdot b(x_{\omega}(s)))b(x_{\omega}(s))$ and $v^\perp_{\omega}(s):=v_{\omega}(s)-v^\parallel_{\omega}(s)$. It follows from the proof of \Cref{S2T1} in \cite{BG25} that $v_{\omega}(s)\cdot b(x_{\omega}(s))\rightarrow h(s)$ in $C^{0,\alpha}_{\operatorname{loc}}$ for every $0\leq \alpha<1$ as $\omega\rightarrow\infty$, where $h$ is defined as in \Cref{S2T1}. Consequently we arrive at the identity
	\begin{gather}
		\label{S5Extra1}
		x_{\omega}(t)=x_0+\int_0^t(h(s)+o(1))b(x_{\omega}(s))ds+\int_0^tv^\perp_{\omega}(s)ds\text{ in }C^{0,\alpha}_{\operatorname{loc}}\text{ for all }0\leq \alpha<1\text{ as }\omega\rightarrow\infty.
	\end{gather} 
	For convenience we define
	\begin{gather}
		\label{S5Extra2}
		z_{\omega}(t):=\omega\int_0^tv^\perp_{\omega}(s)ds.
	\end{gather}
	We introduce the notation 
	\begin{gather}
		\nonumber
		b_{\omega}(s):=b(x_{\omega}(s))\text{, }B_{\omega}(s):=B(x_{\omega}(s))
	\end{gather}
	and notice first that we can express
	\begin{gather}
		\label{5E5}
		v^{\perp}_{\omega}=b_{\omega}\times (v_{\omega}\times b_{\omega}).
	\end{gather}
	Further we recall, see (\ref{S1E2}), that we have the identity
	\begin{gather}
		\label{5E6}
		\dot{v}_{\omega}=\omega (v_{\omega}\times B_{\omega}).
	\end{gather}
	We hence find
	\begin{gather}
		\nonumber
		v^\perp_{\omega}=\frac{b_{\omega}}{|B_{\omega}|\omega}\times \dot{v}_{\omega}.
	\end{gather}
	Consequently we find
	\begin{gather}
		\nonumber
		z_{\omega}(t)=\int_0^t \frac{b_{\omega}(s)}{|B_{\omega}(s)|}\times \dot{v}_{\omega}(s)ds.
	\end{gather}
	We can integrate by parts and find
	\begin{gather}
		\label{5E7}
		z_{\omega}(t)=\frac{b_{\omega}(s)}{|B_{\omega}(s)|}\times v_{\omega}(s)\Bigg|^t_0-\int_0^t\frac{d}{ds}\left(\frac{b_{\omega}(s)}{|B_{\omega}(s)|}\right)\times v_{\omega}(s)ds.
	\end{gather}
	We now compute, using the notation $Db_{\omega}(s):=Db(x_{\omega}(s))$ and similar for other derivatives,
	\begin{gather}
		\nonumber
		\frac{d}{ds}\left(\frac{b_{\omega}}{|B_{\omega}|}\right)=\frac{Db_{\omega}(s)\cdot v_{\omega}(s)}{|B_{\omega}(s)|}+b_{\omega}(s)\frac{d}{ds}(|B_{\omega}(s)|^{-1})
		\\
		\label{5E8}
		=\frac{Db_{\omega}(s)\cdot v_{\omega}(s)}{|B_{\omega}(s)|}-\frac{\frac{\nabla |B_{\omega}(s)|^2}{2}\cdot v_{\omega}(s)}{|B_{\omega}(s)|^4}B_{\omega}(s).
	\end{gather}
	Inserting (\ref{5E8}) into (\ref{5E7}) yields
	\begin{gather}
		\nonumber
		z_{\omega}(t)=\frac{b_{\omega}(s)}{|B_{\omega}(s)|}\times v_{\omega}(s)\Bigg|^t_0-\int_0^t\frac{Db_{\omega}(s)\cdot v_{\omega}(s)}{|B_{\omega}(s)|}\times v_{\omega}(s)ds
		\\
		\label{5E9}
		+\int_0^t\frac{\frac{\nabla |B_{\omega}(s)|^2}{2}\cdot v_{\omega}(s)}{|B_{\omega}(s)|^4}\left(B_{\omega}(s)\times v_{\omega}(s)\right)ds.
	\end{gather}
	We now first deal with the term $\frac{b_{\omega}(t)}{|B_{\omega}(t)|}\times v_{\omega}(t)$. We note that according to \Cref{S2T1} $x_{\omega}(t)\rightarrow x(t)$ in $C^{0,\alpha}_{\operatorname{loc}}$ and that $|v_{\omega}(t)|=|v_0|$ for all $t$ and $\omega$ according to the fact that $\dot{v}_{\omega}$ is orthogonal to $v_{\omega}$, cf. (\ref{S1E2}). Hence
	\begin{gather}
		\nonumber
		\frac{b_{\omega}(t)}{|B_{\omega}(t)|}\times v_{\omega}(t)=\frac{b(x(t))\times v_{\omega}(t)}{|B(x(t))|}+o(1)\text{ in }C^{0}_{\operatorname{loc}}\text{ as }\omega\rightarrow\infty.
	\end{gather}
	We obtain
	\begin{gather}
		\nonumber
		z_{\omega}(t)=\frac{b(x(t))\times v_{\omega}(t)}{|B(x(t))|}-\frac{b(x_0)\times v_0}{|B(x_0)|}-\int_0^t\frac{Db_{\omega}(s)\cdot v_{\omega}(s)}{|B_{\omega}(s)|}\times v_{\omega}(s)ds
		\\
		\label{S5Extra3}
		+\int_0^t\frac{\frac{\nabla |B_{\omega}(s)|^2}{2}\cdot v_{\omega}(s)}{|B_{\omega}(s)|^4}\left(B_{\omega}(s)\times v_{\omega}(s)\right)ds+o(1)\text{ in }C^0_{\operatorname{loc}}\text{ as }\omega\rightarrow\infty.
	\end{gather}
	We observe further that $v^\parallel_{\omega}(s)=h_{\omega}b_{\omega}$ with $h_{\omega}(s):=(b_{\omega}\cdot v_{\omega})$. It then follows as in the beginning of this proof that $h_{\omega}\rightarrow h$ in $C^0_{\operatorname{loc}}$ and similarly that $b(x_{\omega}(s))\rightarrow b(x(s))$ in $C^0_{\operatorname{loc}}$ where $x(s)$ and $h(s)$ are as in \Cref{S2T1}. With this in mind we decompose the first integrand in (\ref{S5Extra3}) as follows
	\begin{gather}
		\nonumber
		\int_0^t\frac{Db_{\omega}\cdot v_{\omega}}{|B_{\omega}|}\times v_{\omega}ds=\int_0^t\frac{Db_{\omega}\cdot v^\parallel_{\omega}}{|B_{\omega}|}\times v^\parallel_{\omega}ds+\int_0^t\frac{Db_{\omega}\cdot v^\parallel_{\omega}}{|B_{\omega}|}\times v^\perp_{\omega}ds
		\\
		\label{5E11}
		+\int_0^t\frac{Db_{\omega}\cdot v^{\perp}_{\omega}}{|B_{\omega}|}\times v^\parallel_{\omega}ds+\int_0^t\frac{Db_{\omega}\cdot v^\perp_{\omega}}{|B_{\omega}|}\times v^\perp_{\omega}ds.
	\end{gather}
	The first term on the right hand side of (\ref{5E11}) converges then locally uniformly to
	\begin{gather}
		\label{5E12}
		\int_0^t\frac{Db_{\omega}\cdot v^\parallel_{\omega}}{|B_{\omega}|}\times v^\parallel_{\omega}ds=\int_0^t\frac{Db(x(s))\cdot v(s)}{|B(x(s))|}\times v(s)ds+o(1)\text{ in }C^0_{\operatorname{loc}}\text{ as }\omega\rightarrow\infty
	\end{gather}
	where $v(s)=\dot{x}(s)$, recall \Cref{S2T1}. As for the mixed terms, we observe that for instance $\frac{Db_{\omega}\cdot v^\parallel_{\omega}}{|B_{\omega}|}$ converges locally uniformly to $\frac{Db(x)\cdot v}{|B(x)|}$. On the other hand it is straightforward to verify by means of (\ref{5E5}) and (\ref{5E6}) that $v^\perp_{\omega}$ converges to zero in the distributional sense. Further, $|v^\perp_{\omega}(t)|\leq |v_{\omega}(t)|=|v_0|$ is bounded and thus locally bounded in every $L^p$-norm, $1\leq p\leq \infty$. From here we conclude that $v^\perp_{\omega}$ converges weakly in $L^p_{\operatorname{loc}}$ to $0$ for every $1\leq p<\infty$. Consequently $\frac{Db_{\omega}\cdot v^\parallel_{\omega}}{|B_{\omega}|}\times v^\perp_{\omega}$ converges weakly in $L^2_{\operatorname{loc}}$ to $0$. But since $\frac{Db_{\omega}\cdot v^\parallel_{\omega}}{|B_{\omega}|}\times v^\perp_{\omega}$ is locally uniformly bounded it follows that the $C^{0,1}_{\operatorname{loc}}$-norm of $\int_0^t\frac{Db_{\omega}\cdot v^\parallel_{\omega}}{|B_{\omega}|}\times v^\perp_{\omega}ds$ is bounded so that along any subsequence we can find a further subsequence such that $\int_0^t\frac{Db_{\omega}\cdot v^\parallel_{\omega}}{|B_{\omega}|}\times v^\perp_{\omega}ds$ converges locally uniformly to some limit function. Due to the weak convergence to zero we conclude that $\int_0^t\frac{Db_{\omega}\cdot v^\parallel_{\omega}}{|B_{\omega}|}\times v^\perp_{\omega}ds$ converges pointwise to zero. Hence the full sequence $\int_0^t\frac{Db_{\omega}\cdot v^\parallel_{\omega}}{|B_{\omega}|}\times v^\perp_{\omega}ds$ converges locally uniformly to zero. Using this reasoning together with (\ref{5E12}) in (\ref{5E11}), yields
	\begin{gather}
		\nonumber
		\int_0^t\frac{Db_{\omega}\cdot v_{\omega}}{|B_{\omega}|}\times v_{\omega}ds=\int_0^t\frac{Db(x(s))\cdot v(s)}{|B(x(s))|}\times v(s)ds
		\\
		\label{5E13}
		+\int_0^t\frac{Db_{\omega}\cdot v^\perp_{\omega}}{|B_{\omega}|}\times v^\perp_{\omega}ds+o(1)\text{ in }C^0_{\operatorname{loc}}\text{ as }\omega\rightarrow\infty.
	\end{gather}
	We now recall (\ref{5E5}) and (\ref{5E6}) and write
	\begin{gather}
		\nonumber
		\int_0^t\frac{Db_{\omega}\cdot v^\perp_{\omega}}{|B_{\omega}|}\times v^\perp_{\omega}ds=\int_0^t\frac{Db_{\omega}\cdot (b_{\omega}\times (v_{\omega}\times b_{\omega}))}{|B_{\omega}|}\times (b_{\omega}\times (v_{\omega}\times b_{\omega}))ds
		\\
		\nonumber
		=\int_0^t\frac{Db_{\omega}\cdot (b_{\omega}\times (v_{\omega}\times b_{\omega}))}{|B_{\omega}|^2}\times \left(b_{\omega}\times\frac{\dot{v}_{\omega}}{\omega}\right)ds.
	\end{gather}
	We now integrate by parts and see that in the limit all terms vanish except for those where the derivative acts upon $v_{\omega}$, i.e.
	\begin{gather}
		\nonumber
		\int_0^t\frac{Db_{\omega}\cdot (b_{\omega}\times (v_{\omega}\times b_{\omega}))}{|B_{\omega}|^2}\times \left(b_{\omega}\times\frac{\dot{v}_{\omega}}{\omega}\right)ds
		\\
		\nonumber
		=-\int_0^t\frac{Db_{\omega}\cdot (b_{\omega}\times (\dot{v}_{\omega}\times b_{\omega}))}{|B_{\omega}|^2}\times \left(b_{\omega}\times\frac{v_{\omega}}{\omega}\right)ds+o(1)\text{ in }C^0_{\operatorname{loc}}\text{ as }\omega\rightarrow\infty.
	\end{gather}
	We note that $b_{\omega}\times (\dot{v}_{\omega}\times b_{\omega})=\dot{v}_{\omega}-(b_{\omega}\cdot \dot{v}_{\omega})b_{\omega}=\dot{v}_{\omega}=\omega v_{\omega}\times B_{\omega}$ since $\dot{v}_{\omega}$ is orthogonal to $b_{\omega}$. Hence
	\begin{gather}
		\nonumber
		-\int_0^t\frac{Db_{\omega}\cdot (b_{\omega}\times (\dot{v}_{\omega}\times b_{\omega}))}{|B_{\omega}|^2}\times \left(b_{\omega}\times\frac{v_{\omega}}{\omega}\right)ds=\int_0^t\frac{Db_{\omega}(v_{\omega}\times b_{\omega})}{|B_{\omega}|}\times (v_{\omega}\times b_{\omega})ds.
	\end{gather}
	We obtain
	\begin{gather}
		\label{5E14}
		\int_0^t\frac{Db_{\omega}\cdot v^\perp_{\omega}}{|B_{\omega}|}\times v^\perp_{\omega}ds=\int_0^t\frac{Db_{\omega}\cdot(v_{\omega}\times b_{\omega})}{|B_{\omega}|}\times (v_{\omega}\times b_{\omega})ds+o(1)\text{ in }C^0_{\operatorname{loc}}\text{ as }\omega\rightarrow\infty.
	\end{gather}
	Using that $v_{\omega}\times b_{\omega}=v^\perp_{\omega}\times b_{\omega}$ and \Cref{5L4} we find
	\begin{gather}
		\nonumber
		\int_0^t\frac{Db_{\omega}\cdot v^\perp_{\omega}}{|B_{\omega}|}\times v^\perp_{\omega}ds=-\int_0^t\frac{(|v_0|^2-h^2_{\omega})\operatorname{curl}^\parallel(b_{\omega})+(\nabla_{v^\perp_{\omega}}b_{\omega}\cdot (v_{\omega}\times b_{\omega}))b_{\omega}}{|B_{\omega}|}ds+o(1)
		\\
		\label{5E15}
		=\int_0^t\mathcal{O}(1)b(x_{\omega}(s))ds\text{ in }C^0_{\operatorname{loc}}\text{ as }\omega\rightarrow\infty
	\end{gather}
	where we used that $|v^\perp_{\omega}|^2=|v_{\omega}|^2-|v^\parallel_{\omega}|^2=|v_0|^2-h^2_{\omega}$, $h_{\omega}(t)=b(x_{\omega}(t))\cdot v_{\omega}(t)$, and we set $\operatorname{curl}^{\parallel}(b_{\omega}):=(\operatorname{curl}(b)(x_{\omega})\cdot b_{\omega})b_{\omega}$.
	\newline
	\newline
	We now turn to the remaining term in (\ref{S5Extra3}) and make use of (\ref{5E6}) in order to integrate by parts where all the terms where the derivative does not act on $v_{\omega}$ will be of order $o(1)$ in the limit
	\begin{gather}
		\nonumber
		\int_0^t\frac{\frac{\nabla |B_{\omega}(s)|^2}{2}\cdot v_{\omega}(s)}{|B_{\omega}(s)|^4}\left(B_{\omega}(s)\times v_{\omega}(s)\right)ds=-\frac{1}{\omega}\int_0^t\frac{\frac{\nabla |B_{\omega}(s)|^2}{2}\cdot v_{\omega}(s)}{|B_{\omega}(s)|^4}\dot{v}_{\omega}(s)ds
		\\
		\nonumber
		=\frac{1}{\omega}\int_0^t\frac{\frac{\nabla |B_{\omega}(s)|^2}{2}\cdot \dot{v}_{\omega}(s)}{|B_{\omega}(s)|^4}v_{\omega}(s)ds+o(1)=\int_0^t\frac{\frac{\nabla |B_{\omega}(s)|^2}{2}\cdot \left(v_{\omega}(s)\times B_{\omega}(s)\right)}{|B_{\omega}(s)|^4}v_{\omega}(s)ds+o(1)
		\\
		\nonumber
		=\int_0^t\frac{\frac{\nabla |B_{\omega}(s)|^2}{2}\cdot \left(v_{\omega}(s)\times B_{\omega}(s)\right)}{|B_{\omega}(s)|^4}v^\perp_{\omega}(s)ds+o(1)\text{ in }C^0_{\operatorname{loc}}\text{ as }\omega\rightarrow\infty.
	\end{gather}
	The last identity follows from the same arguments as in the derivation of (\ref{5E13}). We now make use of the vector calculus identity $\nabla \frac{|B|^2}{2}=\nabla_BB+B\times \operatorname{curl}(B)$ and observe that $\nabla_Bb=\frac{\nabla_BB}{|B|}+\left(B\cdot \nabla |B|^{-1}\right)B$ so that $\frac{\nabla_BB}{|B|^2}\cdot (v\times B)=\frac{\nabla_Bb}{|B|}\cdot (v\times B)=\nabla_bb\cdot (v\times B)$ for any $v\in \mathbb{R}^3$. With this we obtain
	\begin{gather}
		\nonumber
		\int_0^t\frac{\frac{\nabla |B_{\omega}(s)|^2}{2}\cdot v_{\omega}(s)}{|B_{\omega}(s)|^4}\left(B_{\omega}(s)\times v_{\omega}(s)\right)ds=\int_0^t\frac{\left(b_{\omega}\times \operatorname{curl}(B_{\omega})\right)\cdot (v_{\omega}\times b_{\omega})}{|B_{\omega}|^2}v^\perp_{\omega}ds
		\\
		\nonumber
		+\int_0^t\frac{\nabla_{b_{\omega}}b_{\omega}\cdot (v_{\omega}\times b_{\omega})}{|B_{\omega}|}v^\perp_{\omega}ds+o(1)\text{ in }C^0_{\operatorname{loc}}\text{ as }\omega\rightarrow\infty.
	\end{gather}
	We make once more use of our vector calculus identity $0=\nabla \frac{|b|^2}{2}=\nabla_bb+b\times \operatorname{curl}(b)$ since $b$ is a unit field. From this we conclude $\nabla_bb\cdot (v\times b)=\nabla_bb\cdot (v^\perp\times b)=(\operatorname{curl}(b)\times b)\cdot (v^\perp\times b)=(v^\perp\cdot \operatorname{curl}(b))-(b\cdot v^\perp)(b\cdot \operatorname{curl}(b))=v^\perp\cdot \operatorname{curl}(b)$ where $v^\perp=v-(v\cdot b)b$. Further, $(b\times \operatorname{curl}(B))\cdot (v\times b)=(b\times \operatorname{curl}(B))\cdot (v^\perp\times b)=(b\cdot v^\perp)(\operatorname{curl}(B)\cdot b)-\operatorname{curl}(B)\cdot v^\perp=-\operatorname{curl}(B)\cdot v^\perp$ and so we arrive at
	\begin{gather}
		\nonumber				
		\int_0^t\frac{\frac{\nabla |B_{\omega}(s)|^2}{2}\cdot v_{\omega}(s)}{|B_{\omega}(s)|^4}\left(B_{\omega}(s)\times v_{\omega}(s)\right)ds=-\int_0^t\frac{\operatorname{curl}(B_{\omega})\cdot v^\perp_{\omega}}{|B_{\omega}|^2}v^\perp_{\omega}ds
		\\
		\nonumber
		+\int_0^t\frac{\operatorname{curl}(b_{\omega})\cdot v^\perp_{\omega}}{|B_{\omega}|}v^\perp_{\omega}ds+o(1)\text{ in }C^0_{\operatorname{loc}}\text{ as }\omega\rightarrow\infty.
	\end{gather}
	We now make use of the formula $\operatorname{curl}(b)=\frac{\operatorname{curl}(B)}{|B|}+\nabla \left(|B|^{-1}\right)\times B$ so that we obtain
	\begin{gather}
		\label{5E17}
		\int_0^t\frac{\frac{\nabla |B_{\omega}(s)|^2}{2}\cdot v_{\omega}(s)}{|B_{\omega}(s)|^4}\left(B_{\omega}(s)\times v_{\omega}(s)\right)ds=\int_0^t\left((\nabla \left(|B_{\omega}|^{-1}\right)\times b_{\omega})\cdot v^\perp_{\omega}\right)v^\perp_{\omega}ds+o(1)\text{ in }C^0_{\operatorname{loc}}.
	\end{gather}
	We can now combine (\ref{5E13}),(\ref{5E15}) and (\ref{5E17})
	\begin{gather}
		\nonumber
		\int_0^t\frac{\frac{\nabla |B_{\omega}(s)|^2}{2}\cdot v_{\omega}(s)}{|B_{\omega}(s)|^4}(B_{\omega}(s)\times v_{\omega}(s))ds-\int_0^t\frac{Db_{\omega}(s)\cdot v_{\omega(s)}}{|B_{\omega}(s)|}\times v_{\omega}(s)ds
		\\
		\nonumber
		=-\int_0^t\frac{Db(x(s))\cdot v(s)}{|B(x(s))|}\times v(s)ds+\int_0^t\left(\left(\nabla \left(|B_{\omega}|^{-1}\right)\times b_{\omega}\right)\cdot v^\perp_{\omega}\right)v^\perp_{\omega}ds
		\\
		\label{5E18}
		+\int_0^t\mathcal{O}(1)b(x_{\omega}(s))ds+o(1)\text{ in }C^0_{\operatorname{loc}}\text{ as }\omega\rightarrow\infty.
	\end{gather}
	We now turn to the term in (\ref{5E18}) which involves $\nabla \left(|B_{\omega}|^{-1}\right)$. To simplify it we write $v^\perp_{\omega}=b_{\omega}\times (v_\omega\times b_{\omega})=\frac{b_{\omega}}{|B_{\omega}|}\times \frac{\dot{v}_{\omega}}{\omega}$ and integrate by parts which yields
	\begin{gather}
		\nonumber
		\int_0^t(v^\perp_{\omega}\cdot \left(\nabla \left(|B_{\omega}|^{-1}\right)\times b_{\omega}\right))v^\perp_{\omega}ds
		\\
		\label{5E19}
		=\int_0^t\left(\left(\nabla \left(|B_{\omega}|^{-1}\right)\times b_{\omega}\right)\cdot (v_{\omega}\times b_{\omega})\right)(v_{\omega}\times b_{\omega})ds+o(1)\text{ in }C^0_{\operatorname{loc}}\text{ as }\omega\rightarrow\infty.
	\end{gather}
	We recall that we had set $h_{\omega}=b(x_{\omega})\cdot v_{\omega}$ so that $|v_0|^2-h^2_{\omega}=|v^\perp_{\omega}|^2=|v_{\omega}|^2-(v_{\omega}\cdot b_{\omega})^2=|v_{\omega}\times b_{\omega}|^2$ and that $\{b_{\omega},v^{\perp}_{\omega},v_{\omega}\times b_{\omega}\}$ forms an orthogonal basis of $\mathbb{R}^3$. We can therefore express
	\begin{gather}
		\nonumber
		\nabla \left(|B_{\omega}|^{-1}\right)\times b_{\omega}=\frac{\left(\nabla \left(|B_{\omega}|^{-1}\right)\times b_{\omega}\right)\cdot v^\perp_{\omega}}{|v_0|^2-h^2_{\omega}}v^\perp_{\omega}+\frac{\left(\nabla \left(|B_{\omega}|^{-1}\right)\times b_{\omega}\right)\cdot (v_{\omega}\times b_{\omega})}{|v_0|^2-h^2_{\omega}}v_{\omega}\times b_{\omega}
	\end{gather}
	where we used that $\left(\nabla \left(|B_{\omega}|^{-1}\right)\times b_{\omega}\right)\cdot b_{\omega}=0$. We hence obtain from (\ref{5E19})
	\begin{gather}
		\nonumber
		\int_0^t(v^\perp_{\omega}\cdot \left(\nabla \left(|B_{\omega}|^{-1}\right)\times b_{\omega}\right))v^\perp_{\omega}ds=\frac{1}{2}\int_0^t(|v_0|^2-h^2_{\omega})\left(\nabla \left(|B_{\omega}|^{-1}\right)\times b_{\omega}\right)ds+o(1)\text{ in }C^0_{\operatorname{loc}}\text{ as }\omega\rightarrow\infty.
	\end{gather}
	We can insert this into (\ref{5E18})
	\begin{gather}
		\nonumber
		\int_0^t\frac{\frac{\nabla |B_{\omega}(s)|^2}{2}\cdot v_{\omega}(s)}{|B_{\omega}(s)|^4}(B_{\omega}(s)\times v_{\omega}(s))ds-\int_0^t\frac{Db_{\omega}(s)\cdot v_{\omega(s)}}{|B_{\omega}(s)|}\times v_{\omega}(s)ds
		\\
		\nonumber
		=-\int_0^t\frac{Db(x(s))\cdot v(s)}{|B(x(s))|}\times v(s)ds+\int_0^t\frac{|v_0|^2-h^2_{\omega}}{2}\nabla \left(|B_{\omega}|^{-1}\right)\times b_{\omega}ds
		\\
		\nonumber
		+\int_0^t\mathcal{O}(1)b_{\omega}(s)ds+o(1)\text{ in }C^0_{\operatorname{loc}}\text{ as }\omega\rightarrow\infty.
	\end{gather}
	We observe that the corresponding integrands now all converge locally uniformly and so we are allowed to perform the limit. Hence we finally arrive at
	\begin{gather}
		\nonumber
		\int_0^t\frac{\frac{\nabla |B_{\omega}(s)|^2}{2}\cdot v_{\omega}(s)}{|B_{\omega}(s)|^4}(B_{\omega}(s)\times v_{\omega}(s))ds-\int_0^t\frac{Db_{\omega}(s)\cdot v_{\omega(s)}}{|B_{\omega}(s)|}\times v_{\omega}(s)ds
		\\
		\nonumber
		=-\int_0^t\frac{Db(x(s))\cdot v(s)}{|B(x(s))|}\times v(s)ds+\int_0^t\frac{|v_0|^2-h^2(s)}{2}\nabla \left(|B|^{-1}\right)(x(s))\times b(x(s))ds
		\\
		\label{5E20}
		+\int_0^t\mathcal{O}(1)b(x_{\omega}(s))ds+o(1)\text{ in }C^0_{\operatorname{loc}}\text{ as }\omega\rightarrow\infty.
	\end{gather}
	We will now have a closer look at the first term in (\ref{5E20}). To this end we recall that $v=\dot{x}$ is the derivative of the zero order approximation $x$ which satisfies the identity $v(s)=h(s)b(x(s))$, see \Cref{S2T1}, so that $(Db \cdot v)\times v=h^2 (Db\cdot b)\times b$. Further, we have $Db\cdot b=\nabla_bb=\kappa$ (recall this is precisely the definition of the magnetic curvature $\kappa$) and so $(Db\cdot b)\times b=\kappa\times b$.
	
	In addition, according to the chain rule, we have
	\begin{gather}
		\nonumber
		\frac{|v_0|^2-h^2(s)}{2}\nabla \left(|B|^{-1}\right)(x(s))\times b(x(s))=\frac{|v_0|^2-h^2(s)}{2|B(x(s))|}\frac{b(x(s))\times\nabla |B|(x(s))}{|B(x(s))|}.
	\end{gather}
	We notice, recall \Cref{S2T1}, that $\frac{|v_0|^2-h^2(s)}{2|B(x(s))|}=\mu(t)=\mu_0$ where we used that the magnetic moment is preserved in time, cf. \Cref{S2T1}. With these considerations (\ref{5E20}) becomes
	\begin{gather}
		\nonumber
		\int_0^t\frac{\frac{\nabla |B_{\omega}(s)|^2}{2}\cdot v_{\omega}(s)}{|B_{\omega}(s)|^4}(B_{\omega}(s)\times v_{\omega}(s))ds-\int_0^t\frac{Db_{\omega}(s)\cdot v_{\omega(s)}}{|B_{\omega}(s)|}\times v_{\omega}(s)ds
		\\
		\nonumber
		=\int_0^t\frac{h^2(s)}{|B(x(s))|}b(x(s))\times \kappa(x(s))ds+\mu_0\int_0^t\frac{b(x(s))\times \nabla \left(|B|\right)(x(s))}{|B(x(s))|}ds
		\\
		\label{S5Extra4}
		+\int_0^t\mathcal{O}(1)b(x_{\omega}(s))ds+o(1)\text{ in }C^0_{\operatorname{loc}}\text{ as }\omega\rightarrow\infty.
	\end{gather}
	The theorem now follows from (\ref{S5Extra1}),(\ref{S5Extra2}), (\ref{S5Extra3}) and (\ref{S5Extra4}).
\end{proof}
\subsection{Guiding centre expansion}
\begin{proof}[Proof of \Cref{S2T4}]
	We start with the following formula which we derived during the proof of \Cref{S2T2}, recall (\ref{S5Extra1}), (\ref{S5Extra2}),(\ref{S5Extra3}),(\ref{S5Extra4})
	\begin{gather}
		\nonumber
		\Scale[1.1]{x_{\omega}(t)=x_0+\int_0^t(h(s)+o(1))b(x_{\omega}(s))ds+\frac{b(x(t))\times v_{\omega}(t)}{\omega|B(x(t))|}-\frac{b(x_0)\times v_0}{\omega|B(x_0)|}+\frac{\int_0^t\frac{h^2(s)}{|B(x(s))|}b(x(s))\times \kappa(x(s))ds}{\omega}}
		\\
		\label{S5E23}
		+\frac{\mu_0}{\omega}\int_0^t\frac{b(x(s))\times \nabla |B|(x(s))}{|B(x(s))|}ds
		+\int_0^t\mathcal{O}\left(\frac{1}{\omega}\right)b(x_{\omega}(s))ds+o\left(\frac{1}{\omega}\right)\text{ in }C^0_{\operatorname{loc}}\text{ as }\omega\rightarrow\infty
	\end{gather}
	where $\kappa(x):=(\nabla_bb)(x)$, $b(x):=\frac{B(x)}{|B(x)|}$ and $\mu_0\equiv\mu(0),h(s),x(s)$ are as in \Cref{S2T1}.
	
	We notice first that $x_{\omega}(t)\rightarrow x(t)$ in $C^0_{\operatorname{loc}}$, cf. \Cref{S2T1}. In addition we recall that (\ref{S1E2}) implies that $|v_{\omega}(t)|=|v_0|$ so that we immediately obtain
	\begin{gather}
		\nonumber
		\rho_{\omega}(t)=\frac{b(x_{\omega}(t))\times v_{\omega}(t)}{\omega |B(x_{\omega}(t))|}=\frac{b(x(t))\times v_{\omega}(t)}{\omega |B(x(t))|}+o\left(\frac{1}{\omega}\right)\text{ in }C^0_{\operatorname{loc}}\text{ as }\omega\rightarrow\infty
	\end{gather}
	where $\rho_{\omega}(t)$ denotes the gyromotion. According to \Cref{S2D3} we have
	\begin{gather}
		\nonumber
		R_{\omega}(t)=x_{\omega}(t)-\rho_{\omega}(t).
	\end{gather}
	Since $\rho_{\omega}\in \mathcal{O}\left(\frac{1}{\omega}\right)$ in $C^0_{\operatorname{loc}}$ we conclude in particular that $R_{\omega}(t)\rightarrow x(t)$ in $C^0_{\operatorname{loc}}$ as $\omega\rightarrow\infty$. Using these observations we deduce from (\ref{S5E23})
	\begin{gather}
		\nonumber
		R_{\omega}(t)=x_0-\frac{b(x_0)\times v_0}{\omega|B(x_0)|}+\int_0^t(h(s)+o(1))b(R_{\omega}(s))ds
		\\
		\nonumber
		+\frac{\int_0^t\frac{h^2(s)}{|B(R_{\omega}(s))|}b(R_{\omega}(s))\times \kappa(R_{\omega}(s))ds}{\omega}+\frac{\mu_0}{\omega}\int_0^t\frac{b(R_{\omega}(s))\times \nabla |B|(R_{\omega}(s))}{|B(R_{\omega}(s))|}ds
		\\
		\label{S5E24}
		+\int_0^th(s) (b(x_{\omega}(s))-b(R_{\omega}(s)))ds+o\left(\frac{1}{\omega}\right)\text{ in }C^0_{\operatorname{loc}}\text{ as }\omega\rightarrow\infty.
	\end{gather}
	We observe that a priori $x_{\omega}(s)-R_{\omega}(s)=\rho_{\omega}(s)\in \mathcal{O}\left(\frac{1}{\omega}\right)$ and so the term containing $b(x_{\omega}(s))-b(R_{\omega}(s))$ might contribute a term of order $\frac{1}{\omega}$. We claim however that
	\begin{gather}
		\label{S5E25}
		\int_0^th(s)\left(b(x_{\omega}(s))-b(R_{\omega}(s))\right)ds\in o\left(\frac{1}{\omega}\right)\text{ in }C^0_{\operatorname{loc}}\text{ as }\omega\rightarrow\infty.
	\end{gather}
	Notice that (\ref{S5E25}) in combination with (\ref{S5E24}) yields precisely the statement of \Cref{S2T4} so that we are left with establishing (\ref{S5E25}).
	
	To this end we use the following identity
	\begin{gather}
		\nonumber
		b(y)-b(z)=\int_0^1\frac{d}{d\tau}\left(b(z+\tau(y-z))\right)d\tau=\int_0^1(Db)(z+\tau(y-z))d\tau\cdot (y-z)\text{ for all }y,z\in \mathbb{R}^3.
	\end{gather}
	From this we conclude
	\begin{gather}
		\nonumber
		b(x_{\omega}(s))-b(R_{\omega}(s))=\int_0^1(Db)(R_{\omega}(s)+\tau(x_{\omega}(s))-R_{\omega}(s))d\tau\cdot (x_{\omega}(s)-R_{\omega}(s)).
	\end{gather}
	We notice now that $R_{\omega}(s)+\tau (x_{\omega}(s)-R_{\omega}(s))\rightarrow x(s)$ locally uniformly in $\tau$ and $s$ and that $x_{\omega}(s)-R_{\omega}(s)=\rho_{\omega}(s)\in \mathcal{O}\left(\frac{1}{\omega}\right)$ in $C^0_{\operatorname{loc}}$ from which we overall conclude
	\begin{gather}
		\label{S5E26}
		b(x_{\omega}(s))-b(R_{\omega}(s))=(Db)(x(s))\cdot \rho_{\omega}(s)+o\left(\frac{1}{\omega}\right)\text{ in }C^0_{\operatorname{loc}}\text{ as }\omega\rightarrow\infty.
	\end{gather}
	We can insert (\ref{S5E26}) into the left hand side of (\ref{S5E25}) to deduce
	\begin{gather}
		\nonumber
		\int_0^th(s)(b(x_{\omega}(s))-b(R_{\omega}(s)))ds=-\int_0^th(s)(Db)(x(s))\cdot \frac{v_{\omega}(s)\times b(x_{\omega}(s))}{\omega|B(x_{\omega}(s))|}ds+o\left(\frac{1}{\omega}\right) 
		\\
		\label{S5E27}
		=-\int_0^th(s)(Db)(x(s))\cdot \frac{v_{\omega}(s)\times B(x_{\omega}(s))}{\omega|B(x(s))|^2}ds+o\left(\frac{1}{\omega}\right)\text{ in }C^0_{\operatorname{loc}}\text{ as }\omega\rightarrow\infty
	\end{gather}
	where we used that $|v_{\omega}|$ is uniformly bounded and that $x_{\omega}(t)\rightarrow x(t)$ in $C^0_{\operatorname{loc}}$ in the last step. We recall (\ref{S1E2}) which states that $\dot{v}_{\omega}(t)=\omega (v_{\omega}(t)\times B(x_{\omega}(t)))$. We can insert this identity into (\ref{S5E27}) and integrate by parts
	\begin{gather}
		\nonumber
			\int_0^th(s)(b(x_{\omega}(s))-b(R_{\omega}(s)))ds=-\int_0^th(s)(Db)(x(s))\cdot \frac{\dot{v}_{\omega}(s)}{\omega^2|B(x(s))|^2}ds+o\left(\frac{1}{\omega}\right)
			\\
			\label{S5E28}
			=-h(s)(Db)(x(s))\cdot \frac{v_{\omega}(s)}{\omega^2|B(x(s))|^2}\bigg|_0^t+\int_0^t\frac{\frac{d}{ds}\left(\frac{h(s)(Db)(x(s))}{|B(x(s))|^2}\right)}{\omega^2}\cdot v_{\omega}(s)ds+o\left(\frac{1}{\omega}\right)\text{ in }C^0_{\operatorname{loc}}\text{ as }\omega\rightarrow\infty.
	\end{gather}
	We recall that $|v_{\omega}(t)|=|v_0|$ for all $\omega,t$ so that the final expression in (\ref{S5E28}) is of order $o\left(\frac{1}{\omega}\right)$ in $C^0_{\operatorname{loc}}$ as $\omega\rightarrow\infty$ which establishes (\ref{S5E25}). As mentioned previously (\ref{S5E24}) in combination with (\ref{S5E25}) then establishes the theorem.
\end{proof}

\section{Physics literature approach\label{s-physics}}
\subsection{Some definitions}
We follow here the exposition in \cite[Chapter 4]{IGPW24}, see also \cite[Chapter 5]{IGPW20}, since it gives a modern presentation with clearly stated assumptions on the quantities involved under which a corresponding guiding centre expansion is derived.
\newline
\newline
For given initial conditions $x_0,v_0\in \mathbb{R}^3$ we denote the exact particle position, i.e. the solution to (\ref{S1E1}), by $\vec{r}$. Then the particle motion $\vec{r}$ is decomposed into a guiding centre motion $\vec{R}$ and a gyromotion $\vec{\rho}$ with $\vec{r}=\vec{R}+\vec{\rho}$ where $\vec{\rho}(t)=\rho(t)\vec{u}(t)$ for some unit vector $\vec{u}$ which lies in the plane perpendicular to the magnetic field $B$ at the guiding centre $\vec{R}(t)$ and $\rho$ being the length of $\vec{\rho}$, commonly referred to as the gyroradius, cf. \Cref{Figure1}.

	\begin{figure}[H]
	\centering
	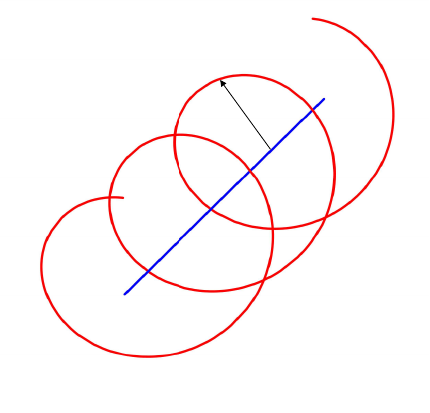
	\caption{The real particle position is depicted in red. The "dominant" guiding centre motion is depicted in blue. The gyromotion is depicted in black. As we follow the trajectory, the gyro motion rotates at some angle $\phi$.}
	\label{Figure1}
\end{figure}

We let $\hat{b}$ denote the normalised magnetic field, i.e. $\hat{b}=b$ in the notation from \Cref{S2T4}. We use this new notation to be consistent with the physics literature notation from \cite[Chapter 5]{IGPW20} in the present section. We further adapt the following notation from \cite[Chapter 5]{IGPW20} (here $\bold{B}$ is the non-normalised, i.e. physical, magnetic field as in (\ref{S1E1})):
\begin{enumerate}
	\item $V_{\parallel}$ denotes the (scalar) component of the guiding centre $\vec{R}$ along $\hat{b}$, i.e. $V_{\parallel}=\hat{b}(\vec{R})\cdot \dot{\vec{R}}$.
	\item We define $\Omega(\vec{R}):=\frac{q}{m} |\bold{B}(\vec{R})|$ where $m,q$ are the mass and charge of the particle.
	\item $v_{\perp}:=\rho \Omega(\vec{R})$ the perpendicular velocity
	\item $\mu:=\frac{v^2_{\perp}}{2|\bold{B}(\vec{R})|}$ the magnetic moment. This quantity is referred to as an adiabatic invariant.
	\item $E:=v^2_{\perp}+V^2_{\parallel}$ the kinetic energy (note that we omit here a factor of $\frac{m}{2}$, $m$ being the mass of the particle) where we set the electric potential $\Phi=0$ since we consider here only the situation of a moving particle in a static magnetic field. This is the conserved energy corresponding to the averaged Lagrangian in the physics literature approach.
	\item $\kappa:=\nabla_{\hat{b}}\hat{b}$ where $\nabla_XY$ denotes the covariant derivative of $Y$ w.r.t. $X$ along the Euclidean Levi-Civita connection, i.e. in Cartesian coordinates $\kappa=\hat{b}^j(\partial_j\hat{b}^i)e_i$ where the $e_i$ denote the standard basis vectors. This quantity is known as the magnetic curvature.
\end{enumerate}
\subsection{Smallness assumptions}
To derive an appropriate expansion for the guiding centre motion the following assumptions are made \cite[Chapter 4]{IGPW24},\cite[Chapter 5]{IGPW20}
\begin{enumerate}
	\item $\frac{\rho}{L_B}\ll 1$ where $\rho$ is the gyroradius and $L_B:=\frac{|\bold{B}|}{|\nabla |\bold{B}||}$ is the length scale of the spatial variation of the magnetic field strength.
	\item It is assumed that there is a suitable orthonormal moving frame along the guiding centre motion such that the rotation angle $\phi$ of the gyromotion with respect to this frame satisfies the condition $\frac{\omega_B}{\dot{\phi}}\ll 1$ where $\dot{\phi}$ is called the gyrofrequency of the system and where $\omega_B\sim \frac{v_t}{L_B}$ is the frequency associated with the length scale of the spatial variation of the magnetic field and the associated thermal velocity $v_t=\sqrt{\frac{2T}{m}}$ of the charged particle (where we omitted the Boltzmann constant in accordance with \cite{IGPW20},\cite{IGPW24}). As a consequence of the assumptions made here and on the structure of the solution below one can derive the relationship $\dot{\phi}\sim\Omega$ with $\Omega$ as above. Due to this, $\Omega$ is also sometimes referred to as the gyrofrequency. A more rigorous interpretation of the angle $\phi$ is discussed in \Cref{MathStructure}, bullet point (iii); see also (\ref{Angle}) for an explicit formula.
	\item An additional assumption on the electric potential is made which we omit here, since we consider the situation of an absent electric field.
\end{enumerate}
\subsection{Structural assumptions}
The following additional assumptions are made on the \textit{structure} of the guiding centre and gyromotions (motivated by the situation of a uniform magnetic field)
\begin{enumerate}
	\item $V_{\parallel}=\dot{\vec{R}}\cdot \hat{b}(\vec{R})\sim v_t$.
	\item For $\vec{V}^\perp\equiv \dot{\vec{R}}^{\perp}=\dot{\vec{R}}-V_{\parallel}\hat{b}(\vec{R})$ we have $|\vec{V}^\perp|\sim \epsilon v_t$ where $\epsilon\sim \frac{\rho}{L_B}\sim \frac{\omega_B}{\dot{\phi}}$ is the "smallnes" scale, i.e. the drift of the guiding centre perpendicular to the magnetic field is assumed to be small in comparison to its motion parallel to the field.
	\item $\dot{\phi}\sim \frac{q}{m} |\bold{B}|$, i.e. the gyrofrequency is assumed to be of the same order of magnitude as $\Omega(\vec{R})$.
	\item $\dot{\rho}\sim \omega_B \rho$ where $\rho$ denotes the (scalar) gyroradius.
\end{enumerate}
\subsection{The guiding centre equations of motion}
The leading order equations of motion which are obtained in the physics literature under the above smallness and structural assumptions are the following, c.f. \cite{IGPW20},\cite{IGPW24}
\begin{gather}
	\label{S3E1}
	\dot{V}_{\parallel}=-\mu \hat{b}(\vec{R})\cdot \nabla |B|(\vec{R})\text{ (parallel guiding centre motion with zero electric field)},
	\\
	\label{S3E2}
	\dot{\vec{R}}^\perp=\frac{V^2_{\parallel}}{\Omega(\vec{R})}\hat{b}(\vec{R})\times \kappa(\vec{R})+\frac{\mu}{\Omega(\vec{R})}\hat{b}(\vec{R})\times \nabla |B|(\vec{R})\text{ (perpendicular guiding centre motion)}.
\end{gather}
\section{Comparison of the mathematical and physics literature results}\label{s-physics-math}
\subsection{Reconciling the mathematical and physical approach}
The notations used in \Cref{S2D3} and \Cref{S2T4} are already suggestive and in the following we discuss their properties with a focus on their physical counter parts.
\begin{enumerate}
	\item We identify the mathematical guiding centre position $R_{\omega}(t)$ with the physical guiding centre motion $\vec{R}(t)$ and the mathematical gyromotion $\rho_{\omega}(t)$ with the physical gyromotion $\vec{\rho}$. In the following we explain that with this identification all physical formulas are recovered from the corresponding mathematical formula (\ref{S2E4}).
	\item (\textit{Parallel guiding centre motion}) We note that according to \Cref{S2T4} the leading order term in the motion of the guiding centre parallel to $b$ is given by the term $\int_0^th(s)b(R_{\omega}(s))ds$ and consequently we can identify $h(t)$ with the corresponding leading order parallel motion $V_{\parallel}$ in the physics literature approach. It follows from \cite[Equation (2.7)]{BG25} that we have the identity
	\begin{gather}
		\label{S4E1}
		\dot{h}(t)=-\mu_0b(x(t))\cdot \nabla |B|(x(t))=-\mu_0 b(R_{\omega}(t))\cdot \nabla |B|(R_{\omega}(t))+o(1)\text{ as }\omega\rightarrow\infty
	\end{gather}
	where we used in the last step that $x_{\omega}(t)-x(t)=o(1)$ since $x_{\omega}(t)$ converges locally uniformly to $x(t)$ and that $x_{\omega}(t)-R_{\omega}(t)=\rho_{\omega}(t)\in \mathcal{O}\left(\frac{1}{\omega}\right)$ since $|v_{\omega}(t)|=|v_0|$ for all $t,\omega$.
	
	According to our identification we have the correspondence $V_{\parallel}\leftrightarrow h$ so that (\ref{S4E1}) coincides precisely with the physics literature formula (\ref{S3E1}) upon discarding higher order terms.
	\item (\textit{Perpendicular guiding centre motion}) We recall that the physical quantity $\dot{\vec{R}}^{\perp}$ appearing in (\ref{S3E2}) corresponds to the leading order (vectorial) motion of the guiding centre in perpendicular direction to the magnetic field. We observe that according to \Cref{S2T4} the corresponding mathematical motion of $R_{\omega}(t)$ in normal direction of the magnetic field is given by
	\begin{gather}
		\label{S4E2}
		\mu_0 \frac{b(R_{\omega}(t))\times \nabla |B|(R_{\omega}(t))}{\omega|B(R_{\omega}(s))|}+h^2(t)\frac{b(R_{\omega}(t))\times \kappa(R_{\omega}(t))}{\omega|B(R_{\omega}(s))|}+o\left(\frac{1}{\omega}\right)\text{ as }\omega\rightarrow\infty.
	\end{gather}
	We are therefore led to identify $\dot{\vec{R}}^{\perp}$ with the leading order term in (\ref{S4E2}). We observe that (\ref{S4E2}) does not yet exactly coincide with the corresponding physics literature formula (\ref{S3E2}).
	
	To make the identification complete, we recall that (\ref{S1E2}) is obtained from (\ref{S1E1}) by specifying some reference magnetic field strength $B_{\operatorname{ref}}$ and setting $B=\frac{\bold{B}}{B_{\operatorname{ref}}}$ and $\omega=\frac{qB_{\operatorname{ref}}}{m}$. We hence see that $\omega |B(y)|=\frac{q}{m}|\bold{B}(y)|$ and in particular
	\begin{gather}
		\nonumber
		\frac{1}{\omega |B(x(t))|}=\frac{1}{\omega |B(R_{\omega}(t))|}+o\left(\frac{1}{\omega}\right)=\frac{1}{\frac{q}{m}|\bold{B}(R_{\omega}(t))|}+o\left(\frac{1}{\omega}\right)\text{ as }\omega\rightarrow\infty
	\end{gather}
	where we again used that $R_{\omega}(t)-x(t)=o(1)$ as $\omega\rightarrow\infty$.
	\newline
	\newline
	Following the physics literature approach and defining $\Omega(R_{\omega}(t)):=\frac{q}{m}|\bold{B}(R_{\omega}(t))|=\omega |B(R_{\omega}(t))|$ we therefore see that (\ref{S4E2}) becomes
	\begin{gather}
		\label{S4E3}
		\mu_0\frac{b(R_{\omega}(t))\times \nabla |B|(R_{\omega}(t))}{\Omega(R_{\omega}(t))}+h^2(t)\frac{b(R_{\omega}(t))\times \kappa(R_{\omega}(t))}{\Omega(R_{\omega}(t))}+o\left(\frac{1}{\omega}\right)\text{ as }\omega\rightarrow\infty.
	\end{gather}
	We recall that we have already identified the correspondence $V_{\parallel}\leftrightarrow h$ and that $\dot{\Vec{R}}^{\perp}$ corresponds to the leading order term in (\ref{S4E3}). Hence (\ref{S4E3}) coincides precisely with (\ref{S3E2}) upon using the identification $V_{\parallel}\leftrightarrow h$ and discarding higher order terms.
	\item In the physics literature setting we had the identities $\mu=\frac{v^2_{\perp}}{2|B(\vec{R})|}$ where $v_{\perp}=\rho\Omega(\vec{R})$ and $\rho$ is the length of the gyromotion $\vec{\rho}$. 
	
	We argue now that with our identifications the corresponding mathematical expression coincides up to higher order terms with $\mu_0$, the magnetic moment of the zero order limit trajectory as introduced in \Cref{S2T1}. This in particular justifies calling the physical magnetic moment an adiabatic invariant.
	\newline
	\newline
	To see this we recall that according to \Cref{S2D3} the gyromotion is given by
	\begin{gather}
		\nonumber
		\rho_{\omega}(t)=\frac{b(x_{\omega}(t))\times v_{\omega}(t)}{\omega|B(x_{\omega}(t))|}=\frac{b(R_{\omega}(t))\times v_{\omega}(t)}{\Omega(R_{\omega}(t))}+o\left(\frac{1}{\omega}\right)\text{ as }\omega\rightarrow\infty
	\end{gather}
	where we used that the $v_{\omega}$ is uniformly bounded. This shows that the leading order gyromotion is a motion which is perpendicular to $b(R_{\omega}(t))$ and thus is compatible with the decomposition of the physics literature. In addition, we compute
	\begin{gather}
		\nonumber
		\hat{\rho}_{\omega}(t):=|\rho_{\omega}(t)|=\frac{\sqrt{|v_{\omega}|^2-(b(x_{\omega})\cdot v_{\omega}(t))^2}}{\omega|B(x_{\omega}(t))|}+o\left(\frac{1}{\omega}\right)=\frac{\sqrt{|v_0|^2-h^2(t)}}{\Omega(R_{\omega}(t))}+o\left(\frac{1}{\omega}\right)\text{ as }\omega\rightarrow\infty,
	\end{gather}
	where we used that $|v_{\omega}(t)|=|v_0|$ for all $t,\omega$ and that $b(x_{\omega})\cdot v_{\omega}(t)\rightarrow h(t)$ locally uniformly, cf. \cite[Proof of Theorem 1.2]{BG25}. According to our identifications and the physics literature definition of the magnetic moment we compute
	\begin{gather}
		\nonumber
		\mu_{\operatorname{phys}}=\frac{\hat{\rho}^2_{\omega}(t)\Omega^2(R_{\omega}(t))}{2|B(R_{\omega}(t))|}=\frac{|v_0|^2-h^2(t)}{2|B(R_{\omega}(t))|}+o\left(1\right)
		\\
		\label{S4E4}
		=\frac{|v_0|^2-h^2(t)}{2|B(x(t))|}+o(1)=\mu_{\operatorname{math}}(t)+o(1)=\mu_0+o(1)\text{ as }\omega\rightarrow\infty
	\end{gather}
	where we used \Cref{S2T1} which states that the mathematical magnetic moment is preserved in time. We conclude that with our identification the mathematical definition of the magnetic moment, cf. \Cref{S2T1}, and the physical definition coincide up to higher order terms.
	\item We recall further that in the physics literature case the kinetic energy $E=v^2_{\perp}+V^2_{\parallel}$ was preserved in time. In our situation we have already identified the correspondence $V_{\parallel}\leftrightarrow h(t)$. According to the physics literature definition and our identification we additionally have the correspondence $v_{\perp}\leftrightarrow \hat{\rho}_{\omega}(t) \Omega(R_{\omega}(t))$. As in the derivation of (\ref{S4E4}) we can compute
	\begin{gather}
		\nonumber
		\left(\hat{\rho}_{\omega}(t) \Omega(R_{\omega}(t))\right)^2=|v_0|^2-h^2(t)+o(1)\text{ as }\omega\rightarrow\infty.
	\end{gather}
	Consequently we obtain
	\begin{gather}
		\nonumber
		E\leftrightarrow h^2(t)+v^2_0-h^2(t)+o(1)=|v_0|^2+o(1)\text{ as }\omega\rightarrow\infty
	\end{gather}
	and so the corresponding quantity $E$ is in leading order preserved in time.
	\item We lastly point out that in the physics literature approach \cite[Chapter 4]{IGPW24},\cite[Chapter 5]{IGPW20}, the equations of motion for the guiding centre are obtained by an averaging procedure which suggests that in some appropriate averaged sense the gyromotion should not contribute to the main guiding centre motion. With our identification we find that the mathematical gyromotion is given by
	\begin{gather}
		\nonumber
		\rho_{\omega}(t)=\frac{b(x_{\omega}(t))\times v_{\omega}(t)}{\omega |B(x_{\omega}(t))|}.
	\end{gather}
	Using the defining equation (\ref{S1E2}) it is easy to see by means of an integration by parts and the fact that $|v_{\omega}(t)|$ is uniformly bounded (and so in particular also $x_{\omega}(t)$ is locally uniformly bounded) that for every fixed $T>0$ we have
	\begin{gather}
		\label{S4E5}
		\frac{1}{T}\int_0^T\rho_{\omega}(t)dt=o\left(\frac{1}{\omega}\right)\text{ in }C^0_{\operatorname{loc}}(\mathbb{R})\text{ as }\omega\rightarrow\infty.
	\end{gather}
	Hence (\ref{S4E5}) shows that the time average $\frac{1}{T}\int_0^Tx_{\omega}(t)dt$ of the full particle trajectory $x_{\omega}(t)$ coincides up to higher order terms with the time average $\frac{1}{T}\int_0^TR_{\omega}(t)dt$ of the guiding centre motion. In this sense, the mathematical gyromotion can be understood as an oscillation whose contributions cancel each other upon averaging.  
\end{enumerate}
\subsection{Comparing the smallness assumptions}
The two assumptions of relevance in our context in the physics literature approach were the following
\begin{gather}
	\nonumber
	\frac{\rho}{L_B}\ll 1\text{ and }\frac{\omega_B}{\dot{\phi}}\ll 1
\end{gather}
where $\rho$ is the gyroradius, $L_B=\frac{|\bold{B}|}{|\nabla |\bold{B}||}$, $\omega_B\sim \frac{v_t}{L_B}$ with $v_t=\sqrt{\frac{2T}{m}}$ and (taking into account the structural assumptions) $\dot{\phi}\sim \frac{q}{m} |\bold{B}|=\omega |B|$ for any choice of reference magnetic field strength, where $\omega=\frac{qB_{\operatorname{ref}}}{m}$ and $B=\frac{\bold{B}}{B_{\operatorname{ref}}}$. In particular, the second physical smallness assumption corresponds to
\begin{gather}
	\label{S4E6}
	\frac{|\nabla |B||}{\omega|B|^2}\sqrt{\frac{2T}{m}}\ll 1.
\end{gather}
In the mathematical approach the assumption we made was
\begin{gather}
	\label{S4E7}
	\omega \gg 1.
\end{gather}
We see that for some fixed magnetic field and values of $T,m$, condition (\ref{S4E7}) implies condition (\ref{S4E6}). In contrast, it may be that $|B|$ is almost constant and therefore (\ref{S4E6}) may be satisfied even in the situation $\omega\sim 1$, i.e. when condition (\ref{S4E7}) is violated. However, it is clear that strong magnetic fields are necessary in order to confine hot plasmas in a magnetic confinement fusion device and therefore the reference magnetic field strength should reflect this property if we wish to obtain good approximations (a good choice would be to pick the reference strength as the average magnetic field strength or minimal magnetic field strength in some region of interest). From this perspective the condition $\omega\gg 1$ is very natural and mathematically also much simpler than condition (\ref{S4E6}).
\newline
\newline
Let us now examine the remaining physical smallness assumption
\begin{gather}
	\nonumber
	\frac{\rho}{L_B}\ll 1.
\end{gather}
With our identification $\rho\leftrightarrow \hat{\rho}_{\omega}(t)$ and we find the identity $L_B=\frac{|B|}{|\nabla |B||}$ since $L_B$ is a relative measure. This physical smallness condition translates to
\begin{gather}
	\label{S4E8}
	\frac{\hat{\rho}_{\omega}(t)}{L_B}\ll 1.
\end{gather}
We have already previously seen that $\hat{\rho}_{\omega}(t)=|\rho_{\omega}(t)|=\frac{\sqrt{|v_0|^2-h^2(t)}}{|B(x(t))|\omega}+o\left(\frac{1}{\omega}\right)$ as $\omega\rightarrow\infty$ where $x(t)$ denotes the zero order limit. In addition, since $R_{\omega}(t)-x(t)=o(1)$ as $\omega\rightarrow\infty$, we obtain $L_B=\frac{|B(R_{\omega}(t))|}{|\nabla |B|(R_{\omega}(t))|}=\frac{|B(x(t))|}{|\nabla |B|(x(t))|}+o(1)$ as $\omega\rightarrow\infty$. Consequently we find
\begin{gather}
	\label{S4E9}
	\frac{\hat{\rho}_{\omega}(t)}{L_B}=\frac{\sqrt{|v_0|^2-h^2(t)}|\nabla |B|(x(t))|}{|B(x(t))|^2\omega}+o\left(\frac{1}{\omega}\right)=o(1)\text{ as }\omega\rightarrow\infty.
\end{gather}
We conclude from (\ref{S4E9}) that if $\omega\gg 1$, then condition (\ref{S4E8}) is satisfied.

Most notably in the mathematical approach, in contrast to the physics literature approach, we do not need to assume a priori that the gyroradius is small in comparison to some characteristic length scale of the system, cf. (\ref{S4E8}). Instead, it turns out a posteriori that the gyroradius is of order $\frac{1}{\omega}$ and hence necessarily small in the strong magnetic field regime. We also emphasise that the definition of the gyromotion as given in \Cref{S2D3} is natural in view of (\ref{S4E5}) since it is defined precisely as the part of the leading order expansion which on average does not contribute to the particle position. 
\subsection{Mathematical structural consequences vs. physical structural assumptions}
\label{MathStructure}
In the physics literature approach a few structural assumptions have been made. In strong contrast to that no such assumptions have been made in the mathematicians approach. One can therefore try to see to what extent the structural assumptions in the physics literature approach are an a posteriori consequence in the mathematical approach and were thus justified to be made.
\begin{enumerate}
	\item The first assumption that has been made was $V_{\parallel}\sim v_t$. From a statistical point of view this can be justified by assuming a Maxwellian distribution of velocity and computing the average velocity (and its variance if desired) which shows that on average the velocity speed in any fixed direction coincides with $\sqrt{\frac{k_BT}{m}}=\frac{v_t}{\sqrt{2}}$ where $k_B$ is the Boltzmann constant \cite[Chapter 1.3]{Chen16}.
	
	Using our correspondence between the physical and mathematical quantitie
s we recall that $V_{\parallel}\leftrightarrow h(t)$. Hence the physical assumption would be equivalent to demanding that $h(t)\sim v_t$ where $v_t$ denotes the thermal velocity. However, there is no reason why $h(t)$ should be bounded away from zero and in fact it may change sign which would correspond to a "reflection" of the particle and physically describes a trapped particle as in a magnetic mirror. Near the reflection points the condition $V_{\parallel}\leftrightarrow h\sim v_t$ is violated and therefore the expansion derived by means of the physics literature approach is no longer justified. On the other hand, our mathematical approach solely requires the assumption $\omega \gg 1$ or equivalently that the magnetic field is strong. Consequently our expansion remains valid even in the regime $V_{\parallel}\ll v_t$. Hence the classical guiding centre approximation may still be used to describe the guiding centre behaviour near reflection points.
	\item An additional assumption that was made was $|\vec{V}^{\perp}|\sim \epsilon v_t$ where $\epsilon>0$ is the smallness parameter. We deduce from \Cref{S2T4} that the normal velocity in the mathematical approach is of order $\frac{1}{\omega}$. Therefore, if we consider the smallness parameter to be $\epsilon\leftrightarrow \frac{1}{\omega}$, then the physical structural assumption $|\vec{V}^{\perp}|\sim \epsilon v_t$ turns out to be a consequence in the mathematical approach. Consequently, imposing this condition on the structure of the solution is, a posteriori, not a restriction and hence justified.
	\item It was assumed that the gyrofrequency satisfies
	\begin{gather}
		\label{S4E10}
		\dot{\phi}\sim \Omega(\vec{R})\leftrightarrow \Omega(R_{\omega}(t))=\omega|B(R_{\omega}(t))|=\omega |B(x(t))|+o(\omega)\text{ as }\omega\rightarrow\infty
	\end{gather}
	where $\dot{\phi}$ denotes the angular velocity of the gyromotion. We note that in the mathematical calculations no angle has appeared explicitly. If we however let $e^2_{\omega}(t),e^3_{\omega}(t)$ be a family of orthonormal bases which span at each time $t$ the plane perpendicular to $b(R_{\omega}(t))$, then we can express
	\begin{gather}
		\nonumber
		\rho_{\omega}(t)=\frac{b(x_{\omega}(t))\times v_{\omega}(t)}{\omega|B(x_{\omega}(t))|}=\frac{b(R_{\omega}(t))\times v_{\omega}(t)}{|b(R_{\omega}(t))\times v_{\omega}(t)|}\frac{\sqrt{|v_0|^2-(b(R_{\omega}(t))\cdot v_{\omega}(t))^2}}{\omega|B(x_{\omega}(t))|}+o\left(\frac{1}{\omega}\right)
		\\
		\nonumber
		=\frac{b(R_{\omega}(t))\times v_{\omega}(t)}{|b(R_{\omega}(t))\times v_{\omega}(t)|}\frac{\sqrt{|v_0|^2-h^2(t)}}{\omega|B(x(t))|}+o\left(\frac{1}{\omega}\right)=u_{\omega}(t)\frac{\sqrt{|v_0|^2-h^2(t)}}{\omega|B(x(t))|}+o\left(\frac{1}{\omega}\right)\text{ as }\omega\rightarrow\infty
	\end{gather}
	where we set $u_{\omega}(t):=\frac{b(R_{\omega}(t))\times v_{\omega}(t)}{|b(R_{\omega}(t))\times v_{\omega}(t)|}$. We observe that $u_{\omega}(t)$ is perpendicular to $b(R_{\omega}(t))$ and of unit length so that we can express it as
	\begin{gather}
		\nonumber
		u_{\omega}(t)=\cos(\beta_{\omega}(t))e^2_{\omega}(t)+\sin(\beta_{\omega}(t))e^3_{\omega}(t)
	\end{gather}
	for suitable angles $\beta_{\omega}(t)$.
	
	We observe that the angles $\beta_{\omega}(t)$ depend on the chosen orthonormal frame $e^2_{\omega}(t),e^3_{\omega}(t)$. In particular, if we let $\alpha\equiv \alpha_{\omega}(t)$ be any other family of angles, then
	\begin{gather}
		\nonumber
		E^2_{\omega}(t):=\cos(\alpha)e^2_{\omega}(t)+\sin(\alpha)e^3_{\omega}(t)\text{, }E^3_{\omega}(t):=\sin(\alpha)e^2_{\omega}(t)-\cos(\alpha)e^3_{\omega}(t)
	\end{gather}
	forms another smooth family of orthonormal frames such that
	\begin{gather}
		\nonumber
		u_{\omega}(t)=\cos(\alpha-\beta)E^2_{\omega}+\sin(\alpha-\beta)E^3_{\omega}
	\end{gather}
	where $\beta\equiv \beta_{\omega}(t)$. Therefore, by letting $\alpha_{\omega}(t):=\beta_{\omega}(t)+\omega\int_0^t|B(x(s))|ds$, we obtain a family of orthonormal frames satisfying 
	\begin{gather}
		\label{Angle}
		u_{\omega}(t)=\cos(\phi_{\omega}(t))E^2_{\omega}(t)+\sin(\phi_{\omega}(t))E^3_{\omega}(t)\text{ where }\phi_{\omega}(t)=\omega\int_0^t|B(x(s))|ds.
	\end{gather}
	Consequently
	\begin{gather}
		\nonumber
		\dot{\phi}_{\omega}(t)=\omega|B(x(t))|=\Omega(R_{\omega}(t))+o(\omega)\text{ as }\omega\rightarrow\infty
	\end{gather}
	where we used the asymptotics from (\ref{S4E10}). We conclude that there exists a family of orthonormal frames $\{b(R_{\omega}(t)),E^2_{\omega}(t),E^3_{\omega}(t)\}$ such that the corresponding angles satisfy the relationship (\ref{S4E10}) as imposed in the physics literature approach. Thus, upon identifying $\phi_{\omega}\leftrightarrow \phi$, this assumption is also justified.
	
	As an example one can consider the most simple case $B(x)=(0,0,1)$ which can be solved analytically and for which one can explicitly compute
	\begin{gather}
		\nonumber
		|v^\perp_0|u_{\omega}(t)=\cos(\omega t)\begin{pmatrix}
			-v_{0,2} \\
			v_{0,1} \\
			0
		\end{pmatrix}+\sin(\omega t)\begin{pmatrix}
		v_{0,1} \\
		v_{0,2} \\
		0
		\end{pmatrix}
	\end{gather}
	where $v_0=(v_{0,1},v_{0,2},v_{0,3})$ is the initial velocity and $v^\perp_0=(v_{0,1},v_{0,2},0)$. We conclude that in this case
	\begin{gather}
		\nonumber
		E^2_{\omega}(t)=\begin{pmatrix}
			-v_{0,2} \\
			v_{0,1} \\
			0
		\end{pmatrix}\text{ and }E^3_{\omega}(t)=\begin{pmatrix}
		v_{0,1} \\
		v_{0,2} \\
		0
		\end{pmatrix}
	\end{gather}
	is a non-rotating coordinate system and so $\phi_{\omega}(t)=\omega\int_0^t|B(x(s))|ds$ describes a real physical rotation with respect to the above non-rotating frame.
	\item The last assumption which was made in the physics literature approach was
	\begin{gather}
		\label{S4E11}
		\dot{\rho}\sim \omega_B\rho
	\end{gather}
	where $\rho$ denotes the gyroradius and $\omega_B\sim \frac{v_t}{L_B}$. We recall that we had found the correspondence
	\begin{gather}
		\nonumber
		\rho\leftrightarrow \hat{\rho}_{\omega}(t)=\frac{\sqrt{|v_0|^2-h^2(t)}}{|B(x(t))|\omega}+o\left(\frac{1}{\omega}\right)\text{ as }\omega\rightarrow\infty.
	\end{gather}
	We notice that $\frac{|v_0|^2-h^2(t)}{|B(x(t)))|^2}=\frac{2\mu_0}{|B(x(t))|}$ where $\mu_0$ is the (time-independent) magnetic moment so that we arrive at
	\begin{gather}
		\label{S4E12}
		\hat{\rho}_{\omega}(t)=\frac{\sqrt{\frac{2\mu_0}{|B(x(t))|}}}{\omega}+o\left(\frac{1}{\omega}\right)\text{ as }\omega\rightarrow\infty.
	\end{gather}
	We observe further that $\frac{d}{dt}\frac{\hat{\rho}^2_{\omega}}{2}=\hat{\rho}_{\omega} \dot{\hat{\rho}}_{\omega}$ and so $\dot{\hat{\rho}}_{\omega}\sim \omega_B\hat{\rho}_{\omega}$ is equivalent to saying that $\frac{d}{dt}\hat{\rho}^2_{\omega}\sim \omega_B\hat{\rho}^2_{\omega}$ or again equivalently we can show that
	\begin{gather}
		\label{S4E13}
		\omega^2\frac{d}{dt}\hat{\rho}^2_{\omega}\sim \omega^2\omega_B\hat{\rho}^2_{\omega}.
	\end{gather}
	To this end we recall that according to (\ref{S4E12}) we have
	\begin{gather}
		\nonumber
		\omega^2\hat{\rho}^2_{\omega}(t)=2\mu_0|B(x(t))|^{-1}+o(1)\text{ as }\omega\rightarrow\infty.
	\end{gather}
	The leading order term is given by $2\mu_0|B(x(t))|^{-1}$ for which we compute
	\begin{gather}
		\nonumber
		2\mu_0 \frac{d}{dt}|B(x(t))|^{-1}=-\frac{2\mu_0}{|B(x(t))|^2}\dot{x}(t)\cdot \nabla |B|(x(t))=-\frac{2\mu_0 h(t)}{|B(x(t))|^2}b(x(t))\cdot \nabla |B|(x(t))
	\end{gather}
	where we used that $\dot{x}(t)=h(t)b(x(t))$. Consequently the leading order term
	\begin{gather}
		\nonumber
		\hat{\rho}^*_{\omega}(t):=\frac{\sqrt{\frac{2\mu_0}{|B(x(t))|}}}{\omega}
	\end{gather}
	in the expansion of $\hat{\rho}_{\omega}(t)$, cf. (\ref{S4E12}), satisfies
	\begin{gather}
		\label{S4E14}
		\frac{d}{dt}\hat{\rho}^*_{\omega}(t)=-\frac{h(t)}{2}\frac{b(x(t))\cdot \nabla |B|(x(t))}{|B(x(t))|}\rho^*_{\omega}(t).  
	\end{gather}
	We notice that according to (\ref{S4E14}) only the relative spatial rate of change in direction of the magnetic field is relevant for the temporal rate of change of the gyroradius so that one cannot expect that the rate of change of the (leading order) term $\rho^*_{\omega}(t)$ is proportional to
	\begin{gather}
		\nonumber
		\frac{1}{\frac{|B(x(t))|}{|\nabla |B|(x(t))|}}=\frac{1}{\frac{|B(R_{\omega}(t))|}{|\nabla |B|(R_{\omega}(t))|}}+o(1)=\frac{1}{L_B}+o(1)\text{ as }\omega\rightarrow\infty
	\end{gather}
	unless the parallel and orthogonal rate of change are of the same order of magnitude. However, we can always obtain the estimate
	\begin{gather}
		\label{S4E15}
		\left|\frac{d}{dt}\rho^{*}_{\omega}(t)\right|\leq \frac{|h(t)|}{2}\frac{\rho^{*}_{\omega}(t)}{\frac{|B(x(t))|}{|\nabla |B|(x(t))|}}.
	\end{gather} 
	 We recall that $\omega_B\sim \frac{v_t}{L_B}$ and that we had the correspondence $V_{\parallel}\leftrightarrow h$. Therefore, if one again imposes a Maxwellian distribution of the velocity within the plasma region one will find that $V_{\parallel}\sim v_t$ on average so that (\ref{S4E15}) implies that $\left|\frac{d}{dt}\rho^{*}_{\omega}(t)\right|\lesssim \frac{v_t}{L_B}\rho^*_{\omega}$ up to lower order terms. However, if the magnetic field variation in normal direction of the field is dominant, then one cannot improve this upper bound $\lesssim$ to a similarity $\sim$ in general. Thus, it follows from the mathematical approach that the physics literature assumption (\ref{S4E11}) is justified only for magnetic fields whose magnetic field strength variation is of the same order of magnitude as its parallel magnetic field strength variation. We emphasise that according to the mathematical approach, assumption (\ref{S4E11}) is not necessary to ensure the validity of the guiding centre approximation, cf. \Cref{S2T4}.
	 
	 We lastly notice once again that the condition $V_{\parallel}\sim v_t$ will be violated by trapped particles and that our mathematical approach does not require making any assumptions about the parallel speed $h(t)$ so that (\ref{S4E14}) and (\ref{S4E15}) are the "correct" version of condition (\ref{S4E11}) which is always valid under the sole assumption of a strong magnetic field. 
\end{enumerate}

\subsection{Final remarks}\label{s-final}
\subsubsection{Validity of the mathematical formulas}
The only assumption in the derivation of \Cref{S2T2} and \Cref{S2T4} that has been made was that $\omega=\frac{qB_{\operatorname{ref}}}{m}\gg 1$ where $B_{\operatorname{ref}}$ is some reference magnetic field strength. One may for instance take $B_{\operatorname{ref}}$ to be the average field strength in the region of interest. With this choice the strength of the normalised magnetic field $B$ will be $1$ on average and hence the terms $\frac{1}{|B(x(t))|}$ appearing in the expressions of (\ref{S2E1}), (\ref{S2E4}) should be roughly of order $\mathcal{O}(1)$ and hence solely $\frac{1}{\omega}$ controls the smallness of the terms. In particular, if $B_{\operatorname{ref}}$ is chosen to be the average of the magnetic field strength and the magnetic field $\bold{B}$ is multiplied by some constant $\alpha>0$, then the normalised fields $B$,$B_{\alpha}$ corresponding to $\bold{B}$ and $\alpha\bold{B}$ respectively coincide, while the reference gyrofrequency $\omega$ is multiplied by $\alpha$ and thus our expansions become more and more accurate.

In addition, we do not require to assume any kind of statistical distribution of the velocity and as a consequence our expansions from \Cref{S2T2} and \Cref{S2T4} remain valid in all physical situations in which an electric field is absent and a strong magnetic field is present.

\subsubsection{Gyromotion and particle path curvature vector}
We argue in this section that the direction of the gyromotion always occurs in the opposite direction of the curvature vector of the (full) particle trajectory $x_{\omega}(t)$ so that its direction is unaffected by the torsion of the particle trajectory. To see this we notice that $|\dot{x}_{\omega}(t)|=|v_{\omega}(t)|=|v_0|$ for all $t$ and $\omega$ by means of (\ref{S1E2}). Hence, the arc-length parametrisation $\tilde{x}_{\omega}(s)$ with arc-length parameter $s$ is given by
\begin{gather}
	\nonumber
	\tilde{x}_{\omega}(s)=x_{\omega}\left(\frac{s}{|v_0|}\right)
\end{gather}
and so we can compute the curvature vector according to
\begin{gather}
	\nonumber
	k(s)N(s)=\tilde{x}^{\prime\prime}_{\omega}(s)=\frac{1}{|v_0|^2}\ddot{x}_{\omega}\left(\frac{s}{|v_0|}\right)=\frac{\omega}{|v_0|^2}\dot{x}_{\omega}\left(\frac{s}{|v_0|}\right)\times B(\widetilde{x}_{\omega}(s)).
\end{gather}
Consequently, at position $x_{\omega}(t)$ which corresponds to the arc-length parameter $s=|v_0|t$ we find
\begin{gather}
	\nonumber
	k(x_{\omega}(t))N(x_{\omega}(t))=\frac{\omega}{|v_0|^2}v_{\omega}(t)\times B(x_{\omega}(t))
	\\
	\nonumber
	=-\frac{b(x_{\omega}(t))\times v_{\omega}(t)}{\omega |B(x_{\omega}(t))|}\frac{\omega^2}{|v_0|^2}|B(x_{\omega}(t))|^2=-\left(\frac{\omega |B(x_{\omega}(t))|}{|v_0|}\right)^2\rho_{\omega}(t)
\end{gather}
where $\rho_{\omega}(t)$ is the gyromotion, cf. \Cref{S2D3}. Since by definition the scalar curvature $k$ is non-negative we conclude that $\rho_{\omega}(t)$ always points in opposite direction of the curvature vector of $x_{\omega}(t)$.

\subsubsection{Curvature term and Lorentz force term}
It is customary in the physics literature to present the guiding centre expansion as in \Cref{S2T4} including a "curvature" term, $b\times \kappa$, and a grad-$B$ term, $b\times \nabla |B|$. However, in the context of plasma physics there is a more insightful way in which these terms may be expressed. We do this here for completion. To this end we recall that plasma equilibria are characterised by the balance law
\begin{gather}
	\label{S4E16}
	B\times \operatorname{curl}(B)=\nabla p
\end{gather} 
where $B$ is the magnetic field and $p$ is the pressure. In applications the pressure function $p$ foliates the plasma domain of interest into (compact) invariant tori so that motions along magnetic field lines will stay confined within some compact region. This is known as Arnold's structure theorem \cite[II Theorem 1.2]{AK21}. Similarly, since the pressure level sets are compact, any drift away from a magnetic field line, but which remains tangent to a pressure surface will stay confined in some finite region. In conclusion, the "bad" drift is the one orthogonal to the level sets of the pressure.

To describe this drift we use the vector calculus identity
\begin{gather}
	\nonumber
	\nabla \frac{|B|^2}{2}=\nabla_BB+B\times \operatorname{curl}(B).
\end{gather}
We may then write
\begin{gather}
	\label{S4E17}
	\frac{b(y)\times \nabla |B(y)|}{|B(y)|}=\frac{b(y)\times \nabla_BB(y)}{|B|^2}+\frac{b\times (B\times \operatorname{curl}(B))}{|B|^2}=b\times \kappa+\frac{b\times \nabla p}{|B|^2}
\end{gather}
according to (\ref{S4E16}) and where we used that $\kappa=\nabla_bb$ and so $\kappa=\nabla_bb=\frac{\nabla_bB}{|B|}+(b\cdot \nabla |B|^{-1})B$. We therefore arrive at the following equivalent expansion for the guiding centre motion
\begin{gather}
	\nonumber
	R_{\omega}(t)=x_0-\frac{b(x_0)\times v_0}{|B(x_0)|\omega}+\int_0^t(h(s)+o(1))b(R_{\omega}(s))ds
	\\
	\nonumber
	+\frac{1}{\omega}\int_0^t\left(\frac{|v_0|^2}{|B(R_{\omega}(s))|}-\mu_0\right)b(R_{\omega}(s))\times \kappa(R_{\omega}(s))ds
	\\
	\label{S4E18}
	+\frac{\mu_0}{\omega}\int_0^t\frac{b(R_{\omega}(s))\times (B(R_{\omega}(s))\times \operatorname{curl}(B)(R_{\omega}(s)))}{|B(R_{\omega}(s))|^2}ds+o\left(\frac{1}{\omega}\right)\text{ in }C^0_{\operatorname{loc}}(\mathbb{R},\mathbb{R}^3)\text{ as }\omega\rightarrow\infty
\end{gather}
where we used the definition of $\mu_0$ and that it is preserved in time. We recall that in the case of plasma equilibria $B\times \operatorname{curl}(B)=\nabla p$ is normal to the pressure level sets and hence the drift-term $b\times (B\times \operatorname{curl}(B))$ corresponds to a drift normal to the magnetic field lines but tangent to the pressure level sets.

Now if consider a regular pressure level set we find $\nabla p\neq 0$ on this level set and in particular $\left\{b,\operatorname{curl}^{\perp}(B),\nabla p\right\}$ with $\operatorname{curl}^{\perp}(B)=\operatorname{curl}(B)-\left(\operatorname{curl}(B)\cdot b\right)b$ forms an orthogonal basis at each point of our regular level set. We can hence expand $b(R_{\omega}(s))\times \kappa(R_{\omega}(s))$ in terms of this basis and see that the drift motion away from a given pressure level set is captured by the following expression
\begin{gather}
	\nonumber
	\frac{1}{\omega}\int_0^t\left(\frac{|v_0|^2}{|B(R_{\omega}(s))|}-\mu_0\right)\left((b(R_{\omega}(s))\times \kappa(R_{\omega}(s)))\cdot \nabla p (R_{\omega}(s))\right)\frac{\nabla p (R_{\omega}(s))}{|\nabla p(R_{\omega}(s))|^2}ds
	\\
	\label{S4E19}
	=\frac{1}{\omega}\int_0^t\left(\frac{|v_0|^2}{|B(x(s))|}-\mu_0\right)\left((B(x(s))\times \nabla p(x(s)))\cdot \nabla (|B|^{-1})(x(s))\right)\frac{\nabla p (R_{\omega}(s))}{|\nabla p (R_{\omega}(s))|^2}ds+o\left(\frac{1}{\omega}\right)
\end{gather}
where we used (\ref{S4E17}) in the last step and replaced the guiding centre by the zero order approximation. We notice now that the field lines $\gamma$ of the vector field $\frac{\nabla p}{|\nabla p|^2}$ satisfy $\frac{d}{dt}p(\gamma(t))=1$ and hence $p(\gamma(t))=t+p(\gamma(0))$ for all $t$. We conclude that $\frac{\nabla p}{|\nabla p|^2}$ describes the infinitesimal rate of change of the pressure. Consequently, its prefactor determines the full rate of change of the pressure along our guiding centre motion, i.e. the rate of change of the pressure is characterised by the quantity
\begin{gather}
	\label{S4E20}
	\left(\frac{|v_0|^2}{|B(x(s))|}-\mu_0\right)\left((B(x(s))\times \nabla p(x(s)))\cdot \nabla (|B|^{-1})(x(s))\right).
\end{gather}
Recently, \cite[Theorem 1.5]{BG25}, an expansion of the pressure along the full particle trajectory $x_{\omega}(t)$ has been derived in a rigorous manner by means of expanding $p(x_{\omega}(t))$ in terms of $\frac{1}{\omega}$. The result was the following
\begin{gather}
	\label{S4E21}
	p(x_{\omega}(t))-p(x_0)=\frac{(b(x(t))\times v_{\omega}(t))\cdot \nabla p(x(t))}{|B(x(t))|\omega}-\frac{(b(x_0)\times v_0)\cdot \nabla p(x_0)}{|B(x_0)|\omega}
	\\
	\label{S4E22}
	+\frac{1}{\omega}\int_0^t\left(\frac{|v_0|^2}{|B(x(s))|}-\mu_0\right)\left((B(x(s))\times \nabla p(x(s)))\cdot \nabla (|B|^{-1})(x(s))\right)ds+o\left(\frac{1}{\omega}\right).
\end{gather}
The first term on the right hand side of (\ref{S4E21}) is an oscillation and becomes negligible upon considering a time-average. The second term on the right hand side in (\ref{S4E21}) takes into account that the full particle trajectory and the guiding centre motion are initially displaced since the gyration has to be subtracted from the real particle position to obtain its guiding centre position. The actual rate of change of the pressure level set is therefore given by the integrand in (\ref{S4E22}) which coincides with the expression in (\ref{S4E20}). Hence, (\ref{S4E18}),(\ref{S4E19}) provide another way to determine the deviation from a given regular pressure surface. Most importantly, both approaches are consistent.

\section*{Acknowledgements}
This work has been partly supported by the  ANR-DFG project “CoRoMo”
ANR-22-CE92-0077-01, and has received financial support from the CNRS through the MITI interdisciplinary programs. This work has been supported by the Inria AEX StellaCage. The research was supported in part by the MIUR Excellence Department Project awarded to Dipartimento di Matematica, Università di Genova, CUP D33C23001110001.
\bibliographystyle{plain}
\bibliography{mybibfileNOHYPERLINK-u}

@article{alfven1940motion,
  author       = {Alfvén, Hannes},
  title        = {On the Motion of a Charged Particle in a Magnetic Field},
  journal      = {Arkiv för Matematik, Astronomi och Fysik},
  volume       = {27, Issue 22},
  year         = {1940},
  pages        = {1--20},
  publisher    = {Almqvist \& Wiksell},
  note         = {Early work introducing guiding-center concepts},
}

@article{northrop1961guiding,
  author       = {Northrop, T. G.},
  title        = {The Guiding Center Approximation to Charged Particle Motion},
  journal      = {Annals of Physics},
  volume       = {15},
  number       = {1},
  pages        = {79--101},
  year         = {1961},
  doi          = {10.1016/0003-4916(61)90167-1},
  note         = {Derives guiding-center velocity including drift terms},
}

@book{northrop1963adiabatic,
  author       = {Northrop, Theodore G.},
  title        = {The Adiabatic Motion of Charged Particles},
  publisher    = {Interscience Publishers},
  year         = {1963},
  series       = {Interscience Tracts on Physics and Astronomy},
  volume       = {21},
  address      = {New York},
  isbn         = {0598353933},
  note         = {Classic systematic derivation of guiding-center expansion and drift terms},
}

@article{littlejohn1983variational,
  author       = {Littlejohn, Robert G.},
  title        = {Variational Principles of Guiding Centre Motion},
  journal      = {Journal of Plasma Physics},
  volume       = {29},
  number       = {1},
  pages        = {111--125},
  year         = {1983},
  doi          = {10.1017/S002237780000060X},
  note         = {Hamiltonian/variational formulation of guiding-center theory},
}

@Book{Chen16,
  title     = {Introduction to {P}lasma {P}hysics and {C}ontrolled {F}usion},
subtitle = {2016},
  publisher = {Springer},
  year      = {2024},
  author    = {Chen, F. F.},
edition   = {Third},
}

@Book{AK21,
  title     = {Topological {M}ethods in {H}ydrodynamics},
subtitle = {},
  publisher = {Springer},
  year      = {2021},
  author    = {Arnold, V.I. and Khesin, B.A.},
edition   = {2nd},
}

@Article{BG25,
  author  = {Boscain, U. and Gerner, W.},
  title   = {Charged particle motion in a strong magnetic field: applications to plasma physics},
  journal = {Nonlinearity},
  year    = {2025},
  volume  = {38},
  number  = {10},
  pages   = {105017},
}

@ARTICLE{IGPW20,
       author = {Imbert-Gerard, L.-M. and Paul, E.J. and Wright, A.M.},
        title = "{An {I}ntroduction to {S}tellarators: {F}rom magnetic fields to symmetries and optimization}",
      journal = {arXiv e-prints},
         year = 2020,
        month = aug,
          eid = {arXiv:1908.05360},
        pages = {},
archivePrefix = {arXiv},
       eprint = {1908.05360},
 primaryClass = {plasm.ph},
}

@Book{IGPW24,
  title     = {An {I}ntroduction to {S}tellarators},
subtitle = {{F}rom magnetic fields to symmetries and optimization},
  publisher = {SIAM},
  year      = {2024},
  author    = {Imbert-Gerard, L.-M. and Paul, E.J. and Wright, A.M.},
edition   = {},
}
\footnotesize
\end{document}